\newif\iflong
\newtheorem{problem}[theorem]{Problem}
\title{Enumeration Complexity of\\ Poor Man's Propositional Dependence Logic\footnote{This work was partially supported by DFG project ME4279/1-2.}}
\author{Arne Meier \and Christian Reinbold}
\date{\small Leibniz Universität Hannover,\\ Institut für Theoretische Informatik,\\ Appelstrasse 4, 30167 Hannover, Germany,\\ \texttt{$\{$meier, reinbold$\}$@thi.uni-hannover.de}}
\def\defeq			{\mathrel{\mathop:}=}
\def\eqdef			{=\mathrel{\mathop:}}
\def\defequiv		{\mathrel{\mathop:}\Leftrightarrow}
\def\defdefeq		{\mathrel{\mathop{{\mathop:}{\mathop:}}}=}
\def\pow			#1{{\mathcal{P}(#1)}}
\def\natural		{{\mathbb{N}}}
\def\ftwo			{\mathbb{F}_2}
\def\restrict		#1{{\big |_{#1}}}
\def\setassign		#1{{2^{#1}}}
\def\setteam		#1{{\mathcal{P}\left(\setassign{#1}\right)}}
\def\setsatteams	#1{\mathcal{T}_{#1}}
\def\cardsatteams	#1{t_{#1}}
\def\setsatteamsz	#1{\mathcal{T}_{#1}^0}
\def\cardsatteamsz	#1{t_{#1}^0}
\def\sol			{{\mathrm{Sol}}}
\def\first			{\mathrm{first}}
\def\last			{\mathrm{last}}
\def\red			{\mathrm{red}}
\def\incP			{\ensuremath{\mathsf{IncP}}}
\def\delayP			{\ensuremath{\mathsf{DelayP}}}
\def\delaySpaceP	{\ensuremath{\mathsf{DelaySpaceP}}}
\def\incFPT			{\ensuremath{\mathsf{IncFPT}}}
\def\delayFPT		{\ensuremath{\mathsf{DelayFPT}}}
\def\FPT		{\ensuremath{\mathsf{FPT}}}
\def\p					{\ensuremath{\textsf{P}}}
\def\np					{\ensuremath{\textsf{NP}}}
\def\fpt				{\ensuremath{\textsf{FPT}}}
\def\pl				{{\mathcal{P\!L}^-}}
\def\pdl			{{\mathcal{PD\!L}^-}}
\def\pdlnovee		{\pdl}
\def\var			{{\text{Var}}}
\def\classvee		{\varovee}
\def\dep			#1{{=\!\!(#1)}}
\def\acts			{{\ \circlearrowleft\ }}
\def\bigo			{{\mathcal{O}}}
\def\corresponds	{\hat{=}}
\def\enumteampl			{{\textsc{EnumTeam}}}
\def\penumteampl		{{\textsc{p-EnumTeam}}}
\def\enumteamplsize		{{\textsc{EnumTeamSize}}}
\def\penumteamplsize	{{\textsc{p-EnumTeamSize}}}
\begin{document}

\maketitle

\begin{abstract}
Dependence logics are a modern family of logics of independence and dependence which mimic notions of database theory.
In this paper, we aim to initiate the study of enumeration complexity in the field of dependence logics and thereby get a new point of view on enumerating answers of database queries. 
Consequently, as a first step, we investigate the problem of enumerating all satisfying teams of formulas from a given fragment of propositional dependence logic. 
We distinguish between restricting the team size by arbitrary functions and the parametrised version where the parameter is the team size. 
We show that a polynomial delay can be reached for polynomials and otherwise in the parametrised setting we reach FPT delay. 
However, the constructed enumeration algorithm with polynomial delay requires exponential space. 
We show that an incremental polynomial delay algorithm exists which uses polynomial space only. 
Negatively, we show that for the general problem without restricting the team size, an enumeration algorithm running in polynomial space cannot exist.
\end{abstract}

\section{Introduction}
Consider the simple database scheme containing a single table \textsc{Smartphone} with attributes \textsc{Manufacturer} (\textsc{M}), \textsc{Serial Number} (\textsc{SN}), \textsc{Manufacture Date} (\textsc{MD}) and \textsc{Bluetooth Support} (\textsc{BS}), where \textsc{Manufacturer} and \textsc{Serial Number} form the primary key. Now we are interested in all possible answers of a database query on \textsc{Smartphone} selecting entities with bluetooth support. In terms of dependence logic, a database instance $T$ conforms with the primary key condition if and only if $T \models \dep{\{\textsc{M}, \textsc{SN}\},\{\textsc{MD}, \textsc{BS}\}}$. Taking the selection of the query into consideration, we obtain the formula
$\dep{\{\textsc{M}, \textsc{SN}\},\{\textsc{MD}, \textsc{BS}\}} \wedge \textsc{BS}$ for which we would like to enumerate satisfying database instances. Since team semantics is commonly used in the area of dependence logic, we model the database instance $T$ as a team, that is, a set of assignments, such that each assignment represents a row in \textsc{Smartphone}.

The task of enumerating all solutions of a given instance is relevant in several prominent areas, e.g., one is interested in all tuples satisfying a database query, DNA sequencings, or all answers of a web search.
In enumeration complexity one is interested in outputting all solutions of a given problem instance without duplicates.
Often, the algorithmic stream of solutions has to obey a specific order, in particular, on such order increasingly arranges solutions with respect to their cost.
In view of this, the enumeration task (with respect to this order) outputs the cheapest solutions first.
Of course, all these algorithms usually are not running in polynomial time as there often exist more than polynomially many solutions.
As a result, one classifies these deterministic algorithms with respect to their \emph{delay} \cite{JY88,S09,S10}.
Informally, the delay is the time which elapses between two output solutions and guarantees a continuous stream of output solutions.
For instance, the class $\delayP$ then encompasses problems for which algorithms with a polynomial delay (in the input length) exist.
Another class relevant to this study is $\incP$, incremental $\p$.
For this class the delay of outputting the $i$th solution of an instance is polynomial in the input size \emph{plus} the index $i$ of the solution. 
Consequently, problem instances exhibiting exponentially many solutions eventually possess an exponential delay whereas, in the beginning, the delay was polynomial.
Some natural problems in this class are known such as enumerating all minimal triangulations \cite{DBLP:conf/pods/CarmeliKK17} or some problems for matroids \cite{begkm05}.

A prominent approach to attacking computationally hard problems is the framework of parametrised complexity by Downey and Fellows \cite{DowneyFellows13,DowneyFellows99}.
Essentially, one searches for a \emph{parameter} $k$ of a given problem such that the problem can be solved in time $f(k)\cdot n^{\mathcal O(1)}$ instead of $n^{f(k)}$ where $n$ is the input length and $f$ is an arbitrary recursive function.
Assuming that the parameter is slowly growing or even constant, then the first kind of algorithms is seen relevant for practice.
In these cases, one says that the problem is \emph{fixed parameter tractable}, or short, in $\FPT$.
A simple example here is the propositional satisfiability problem with the parametrisation \emph{numbers of variables}.
For this problem, the straightforward brute-force algorithm already yields $\FPT$.
Recently, this framework has been adapted to the field of enumeration by Creignou et~al.\ \cite{cmmjv16,ckmmov15}.
There, the authors introduced the corresponding enumeration classes $\delayFPT$ and $\incFPT$ and provided some characterisations of these classes.

In 2007, Jouko Väänänen introduced \emph{dependence logic} (DL) \cite{vaananen07} as a novel variant of Hintikka's \emph{independence-friendly logic}. 
This logic builds on top of compositional team semantics which emerges from the work of Hodges \cite{Hodges97c}.
In this logic, the satisfaction of formulas is interpreted on \emph{sets of assignments}, i.e., teams, instead of a single assignment as in classical Tarski semantics.
Significantly, this semantics allows for interpreting reasoning in this logic in the view of databases.
Essentially, a team then is nothing different than a database: its domain of variables is the set of columns and its (team) members, that is, assignments, can be seen as rows in the table.
As a result, the aforementioned dependence atoms allow for expressing key properties in databases, e.g., functional dependencies.
On that account, many research from the area of databases coalesced with scientific results from logic, complexity theory and further other disciplines \cite{DBLP:journals/dagstuhl-reports/AbramskyKVV13}.
As a result, within the team semantics setting several different formalisms have been investigated that have counterparts in database theory: inclusion and exclusion dependencies \cite{Fagin:1981:NFR:319587.319592,CASANOVA198429,Casanova:1983:TSV:588058.588065,GALLIANI201268}, functional dependence (the dependence atom $\dep{P,Q}$) \cite{vaananen07}, and independence \cite{DBLP:journals/sLogica/GradelV13}.
Such operators will be the topic of future research connecting to the here presented investigations.

To bring the motivation full circle, the study of enumeration in dependence logic is the same as investigating the enumeration of answers of specific database queries described over some formulas in some logic.
For instance, consider a dependence logic formula that specifies some database related properties such as functional or exclusion dependencies of some attributes.
Now one is interested in the question whether this specification is meaningful in the sense that there exists a database which obeys these properties.
This problem can be seen as the satisfiability problem in dependence logic.
Further connections to database theory have been exemplified by Hannula et~al.\ \cite{DBLP:journals/corr/HannulaKV17}.
The study of database queries is a deeply studied problem and exists for several decades now.
Our aim for this paper is to initiate the research on enumeration (in databases) from the perspective of dependence logic.
This modern family of logics might give fresh insights into this settled problem and produce new enumeration techniques that will help at databases as well.
From a computational complexity perspective, DL is well understood: most of the possible operator fragments have been classified \cite{hkvv15ext,HKMV16}.
However, it turned out that model checking and satisfiability for propositional dependence logic $\mathcal{PD\!L}$ are already $\np$-complete \cite{EL12,lv13}.
As a result, tractable enumeration of solutions in the full logic is impossible (unless $\p$ and $\np$ coincide) and we focus on a fragment of $\mathcal{PD\!L}$ which we will call, for historical reasons, the Poor Man's fragment \cite{h01}.

In this paper, we investigate the problem of enumerating all satisfying teams of a given Poor Man's propositional dependence logic formula.
In particular, we distinguish between restricting the team size by arbitrary functions $f$ and the parametrised version where the parameter is the team size.
We show that $\delayP$ can be reached if $f$ is a polynomial in the input length and otherwise the parametrised approach leads to $\delayFPT$.
However, the constructed $\delayP$ enumeration algorithm requires exponential space.
If one desires to eliminate this unsatisfactory space requirement, we show that this can be achieved by paying the price of an increasing delay, i.e., then an $\incP$ algorithm can be constructed which uses polynomial space only.
Here, we show, on the downside, that for the general problem without restricting the team size an enumeration algorithm running in polynomial space cannot exist.
\iflong
\else\smallskip 

\noindent Proofs omitted for space reasons can be found in the technical report \cite{mr17}.
\fi

\section{Preliminaries}
Further, the underlying machine concept will be RAMs as we require data structures with logarithmic costs for standard operations. 
A detailed description of the RAM computation model may be found in \cite{L90}. 
The space occupied by a RAM is given by the total amount of used registers, provided that the content of each register is polynomially bounded in the size of the input.
Furthermore, we will follow the notation of Durand et~al.\ \cite{D16}, Creignou et~al.\ \cite{cmmjv16} and Schmidt \cite{S09}.
The complexity classes of interest are $\p$ and $\np$ (over the RAM model which is equivalent to the standard model over Turing machines in this setting). 

\paragraph*{Team-based Propositional Logic} \label{sec:team_based_propositional_logic}
Let $\mathcal{V}$ be a (countably infinite) set of variables. 
The class of all \emph{Poor Man's Propositional formulas} $\pl$ is derived via the grammar 
$$\varphi \defdefeq x \mid \neg x \mid 0 \mid 1 \mid \varphi \wedge \varphi,$$
where $x\in\mathcal{V}$. 
The set of all variables occurring in a propositional formula $\varphi$ is denoted by $\var(\varphi)$. 

Now we will specify the notion of teams and its interpretation on propositional formulas. 
An \emph{assignment} over $\mathcal{V}$ is a mapping $s\colon\mathcal{V} \rightarrow \{0,1\}.$	
We set  $\setassign{\mathcal{V}} \defeq \{s : s \text{ assignment over } \mathcal{V}\}.$
A \emph{team} $T$ over $\mathcal{V}$ is a subset $T\subseteq \setassign{\mathcal{V}}$. 
Consequently, the set of all teams over $\mathcal{V}$ is denoted by $\setteam{\mathcal{V}}$. 
If $X$ is a subset of $\mathcal{V}$, we set $T\restrict{X} \defeq \left\{s\restrict{X} : s\in T\right\},$ where $s\restrict{X}$ is the restriction of $s$ on $X$. 
If $T$ has cardinality $k\in\natural$, we say that $T$ is a \emph{$k$-Team}. 
If $\varphi$ is a formula, then a team (assignment) over $\var(\varphi)$ is called a \emph{team} (\emph{assignment}) \emph{for $\varphi$}.

A \emph{team-based prop\-o\-si\-tion\-al formula} $\varphi$ is constructed by the rule set of $\pl$ with the extension $\varphi \defdefeq \dep{P,Q} \iftoggle{classneg}{\mid \classneg \varphi},$
	where $P,Q$ are sets of arbitrary variables. 
We write $\dep{x_1,x_2,\dots,x_n}$ as a shorthand for $\dep{\{x_1,x_2,\dots,x_{n-1}\},\{x_n\}}$ and set $\pdl \defeq \pl(\dep{\cdot})$ for the formulas of \emph{Poor Man's Propositional Dependence Logic}.

\begin{definition}[Satisfaction]
	Let $\varphi$ be a team-based propositional formula and $T$ be a team for $\varphi$. 
	We define $T \models \varphi$ inductively by
	\begin{alignat*}{2}
		T& \models x &&\defequiv s(x)=1\quad\forall s\in T,\\
		T& \models \neg x &&\defequiv s(x)=0\quad\forall s\in T,  \\
		T& \models 1 &&\defequiv \text{true},\\
		T& \models 0 &&\defequiv T = \emptyset,  \\
		T& \models \varphi \wedge \psi &&\defequiv T \models \varphi \text{ and } T \models \psi,\\
\iflong		T& \models \varphi \classvee \psi &&\defequiv T \models \varphi \text{ or } T \models \psi, \\\fi
		T& \models \dep{P,Q} &&\defequiv \forall s,t\in T: s\restrict{P} = t\restrict{P} \Rightarrow s\restrict{Q} = t\restrict{Q}\iftoggle{classneg}{,}{.} 
	\end{alignat*}
	We say that \emph{$T$ satisfies $\varphi$} iff $T\models\varphi$ holds.
\end{definition}

Note that we have $T\models (x \wedge \neg x)$ iff $T=\emptyset$. This observation motivates the definition for $T\models 0$.
Observe that the evaluation in classical propositional logic occurs as the special case of evaluating singletons in team-based propositional logic.

\begin{definition}[Downward closure]
	\label{def:pl_downward_closure}
A team-based propositional formula $\varphi$ is called \emph{downward closed}, if for every team $T$ we have that 
		$T \models \varphi \Rightarrow \forall S\subseteq T: S \models \varphi.$
An operator $\circ$ of arity $k$ is called \emph{downward closed}, if $\circ(\varphi_1,\dots,\varphi_k)$ is downward closed for all downward closed formulas $\varphi_i$, $i=1,\dots,k$.
A class $\phi$ of team-based propositional formulas is called \emph{downward closed}, if all formulas in $\phi$ are downward closed.
\end{definition}

The following lemma then is straightforward to prove.

\begin{lemma}
	All atoms and operators in $\pdl$ are downward closed. In particular, $\pdl$ is downward closed.
\end{lemma}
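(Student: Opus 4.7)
The plan is to verify downward closure case-by-case for each atom and each operator appearing in $\pdl$, and then conclude the class-level statement by a straightforward structural induction.

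First, for the atoms. The cases $x$, $\neg x$, and $1$ are immediate from the universal-quantifier shape of their truth conditions: if a property $s(x)=1$, respectively $s(x)=0$, holds for every $s\in T$, then a fortiori it holds for every $s\in S$ whenever $S\subseteq T$; and $1$ is satisfied by every team. For $0$, note that $T\models 0$ forces $T=\emptyset$, in which case the only subset is $\emptyset$ itself, which also satisfies $0$. For the dependence atom $\dep{P,Q}$, observe that the defining universal condition ranges over pairs $s,t\in T$; restricting to $S\subseteq T$ can only shrink this quantification, so the implication $s\restrict{P}=t\restrict{P}\Rightarrow s\restrict{Q}=t\restrict{Q}$ continues to hold for all $s,t\in S$.

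Next, for the only non-atomic operator of $\pdl$, namely $\wedge$, suppose $\varphi$ and $\psi$ are downward closed and $T\models\varphi\wedge\psi$. Unfolding the semantic clause gives $T\models\varphi$ and $T\models\psi$. For any $S\subseteq T$ we invoke the downward closure hypothesis on $\varphi$ and $\psi$ separately to obtain $S\models\varphi$ and $S\models\psi$, hence $S\models\varphi\wedge\psi$. This shows $\wedge$ is downward closed as an operator in the sense of Definition~\ref{def:pl_downward_closure}.

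Finally, to conclude that every $\pdl$-formula is downward closed, I would perform a structural induction on the formula construction: the base cases are exactly the atoms handled above, and the single induction step is the conjunction case, which preserves downward closure by the argument just given. Hence $\pdl$ is a downward closed class of team-based propositional formulas. There is essentially no obstacle here; the only mildly non-mechanical observation is the one for $\dep{P,Q}$, namely that the semantic condition is a $\forall\forall$-statement over team members and therefore automatically monotone with respect to team inclusion.
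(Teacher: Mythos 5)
Your proof is correct and follows essentially the same route as the paper: check each atom directly, then show $\wedge$ preserves downward closure and conclude by structural induction. The only difference is that you spell out the atom cases (in particular the $\forall\forall$ observation for $\dep{P,Q}$) where the paper declares them easy, and the paper additionally treats the classical disjunction $\classvee$, which is not needed for $\pdl$ itself.
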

\iflong
\begin{proof}
	It is easy to see that the atoms $x$, $\neg x$, $0$, $1$, $\dep{\cdot}$ are downward closed. Let $\varphi$, $\psi$ be two downward closed formulas. Let $T$ be a team with $T \models \varphi \wedge \psi$. Then we have $T\models \varphi$ and $T\models\psi$, so that $S\models\varphi$ and $S\models\psi$ for every subset $S\subseteq T$. It follows that $S\models\varphi\wedge\psi$. For this reason $\wedge$ is downward closed. The proof for $\classvee$ is similar.
\end{proof}
\fi

\paragraph*{Enumeration problems}
\label{sec:enumeration_problem}

Let $\Sigma$ be a finite alphabet and $(S,\ \le)$ a partially ordered set of possible solutions. 
An \emph{enumeration problem} is a triple $E = (Q,\ \sol,\ \le)$ such that
(i) $Q\subset \Sigma^*$ is a decidable language and
(ii) $\sol\colon Q \rightarrow \pow{S}$ is a computable function.
For an element $x\in Q$ we call $x$ an \emph{instance} and $\sol(x)$ its set of \emph{solutions}. 
If $\le$ is the trivial poset given by $x\le y \defequiv x = y,$ we omit it and write $E = (Q,\ \sol).$
Analogously, we write  $x<y$ for $x\leq y \text{ and } x\neq y.$

\begin{definition}[Enumeration algorithm]
	Let $E = (Q,\ \sol,\ \le)$ be an enumeration problem. A deterministic algorithm $\mathcal{A}$ is an \emph{enumeration algorithm} for $E$ if \iflong\else $\mathcal{A}$ terminates~\fi for every input $x\in Q$%
\iflong
	\begin{enumerate}
		\item $\mathcal{A}$ terminates,
		\item $\mathcal{A}$ outputs the set $\sol(x)$ without duplicates,
		\item for every $s,t\in \sol(x)$ with $s < t$ the solution $s$ is outputted before $t$.
	\end{enumerate}
\else,
outputs the set $\sol(x)$ without duplicates and
for every $s,t\in \sol(x)$ with $s < t$ the solution $s$ is outputted before $t$.
\fi
\end{definition}


\begin{definition}[Delay]
	Let $\mathcal{A}$ be an enumeration algorithm for the enumeration problem $E = (Q,\ \sol,\ \le)$ and $x\in Q$. The \emph{$i$-th delay} of $\mathcal{A}$ is defined as the elapsed time between outputting the $i$-th and $(i+1)$-th solution of $\sol(x)$, where the $0$-th and $(|\sol(x)|+1)$-st delay are considered to happen  at the start and the end of the computation respectively. The $0$-th delay is called \emph{precomputation phase} and the $(|\sol(x)|+1)$-st delay is called \emph{postcomputation phase}.
\end{definition}

\begin{definition}
	Let $E = (Q,\ \sol,\ \le)$ be an enumeration problem and $\mathcal{A}$ be an enumeration algorithm for $E$. $\mathcal{A}$ is 
	\begin{enumerate}
		\item an \emph{$\incP$-algorithm} if there exists a polynomial $p$ such that the $i$-th delay on input $x\in Q$ is bounded by $p(|x| + i)$.
		\item a \emph{$\delayP$-algorithm} if there exists a polynomial $p$ such that all delays on input $x\in Q$ are bounded by $p(|x|)$.
		\item a \emph{$\delaySpaceP$-algorithm} if it is a $\delayP$-algorithm using polynomial amount of space with respect to the size of the input.
	\end{enumerate}
\end{definition}

For ease of notation, we define the classes $\delayP$ ($\incP,\delaySpaceP$) as the class of all enumeration problems admitting a $\delayP$- ($\incP$, $\delaySpaceP$)-algorithm.
%
%
%
%
%
%
%
Now we introduce the parametrised version of enumeration problems. The extensions are similar to those when extending $\p$ to $\fpt$. We follow Creignou et~al.\ \cite{cmmjv16}.

\begin{definition}[Parametrised enumeration problem]
	An enumeration problem $(Q, \sol, \le)$ together with a polynomial time computable \emph{parametrisation} $\kappa\colon\Sigma^* \rightarrow \natural$ is called a \emph{parametrised enumeration problem} $E = (Q,\ \kappa,\ \sol,\ \le)$. As before, if $\le$ is omitted, we assume $\le$ to be trivial.
\end{definition}


\begin{definition}
	Let $\mathcal{A}$ be an enumeration algorithm for a parametrised enumeration problem $E = (Q,\ \kappa,\ \sol,\ \le)$. If there exist a polynomial $p$ and a computable function $f\colon\natural\rightarrow\natural$ such that the $i$-th delay on input $x\in Q$ is bounded by $f(\kappa(x)) \cdot p(|x| + i)$, then $\mathcal{A}$ is an \emph{$\incFPT$-algorithm}. We call $\mathcal{A}$ a $\delayFPT$-algorithm if all delays on input $x\in Q$ are bounded by $f(\kappa(x)) \cdot p(|x|)$. The class $\incFPT$ contains all enumeration problems that admit an $\incFPT$-algorithm. The class $\delayFPT$ is defined analogously.
\end{definition}

	

\paragraph*{Group action}

The following section provides a compact introduction in group actions on sets. For a deeper introduction see, for instance, Rotman's textbook \cite{R95}.

\begin{definition}[Group action]\label{remark_group_actions}
	Let $G$ be a group with identity element $e$ and $X$ be a set. A \emph{group action of $G$ on $X$}, denoted by $G\acts X$, is a mapping $G\times X \rightarrow X$, $(g, x) \mapsto gx$, with 
	\begin{enumerate}[label={\upshape(\roman*)}]
		\item $ex = x\quad \forall x\in X$
		\item $(gh)x = g(hx)\quad \forall g,h\in G,\ x\in X.$
	\end{enumerate}
\end{definition}

Now observe the following. Let $G$ be a group and $X$ a set.
The mapping $(g,h)\mapsto gh$ for $g,h\in G$ defines a group action of $G$ on itself.
A group action $G\acts X$ induces a group action of $G$ on $\pow{X}$ by
$gS \defeq \{gs : s\in S\}$ for all $g\in G,\ S\subseteq X.$
Note that this group action preserves the cardinality of sets.

\begin{definition}[Orbit]
	Let $G\acts X$ be a group action and $x\in X$. Then the \emph{orbit of $x$} is given by
	$Gx \defeq \{gx : g\in G\}\subseteq X.$
\end{definition}

\begin{proposition}[{\cite{R95}}] 
	\label{prop:orbit_partition}
	Let $G\acts X$ be a group action and $x,y\in X$. Then either $Gx = Gy$ or $Gx \cap Gy = \emptyset$. Consequently the orbits of $G\acts X$ partition the set $X$.
\end{proposition}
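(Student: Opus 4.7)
The plan is to prove the dichotomy ($Gx = Gy$ or $Gx \cap Gy = \emptyset$) directly by a contrapositive-style argument on a shared element, and then derive the partition conclusion as an immediate corollary.

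First, I would handle the dichotomy. Assume $Gx \cap Gy \neq \emptyset$ and pick some $z$ in the intersection. By definition of orbit, there exist $g, h \in G$ with $z = gx = hy$. Acting on both sides of $gx = hy$ with $g^{-1}$ and using axiom~(ii) of Definition~\ref{remark_group_actions} together with the identity property~(i), I obtain $x = (g^{-1}h)y$, so $x \in Gy$. To upgrade this to $Gx \subseteq Gy$, pick an arbitrary $u \in Gx$, so $u = kx$ for some $k \in G$; substituting the expression for $x$ and using axiom~(ii) gives $u = (k g^{-1} h) y \in Gy$. By symmetry (swapping the roles of $x$ and $y$, and using the inverse $h^{-1}g$) the reverse inclusion $Gy \subseteq Gx$ holds, yielding $Gx = Gy$.

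For the partition statement I would verify the two defining conditions. Coverage follows from axiom~(i): every $x \in X$ satisfies $x = ex \in Gx$, so $X = \bigcup_{x \in X} Gx$. Pairwise disjointness of distinct orbits is exactly the dichotomy established above: if two orbits share any element they must be equal, so any two orbits that are set-theoretically distinct are disjoint.

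The main obstacle is essentially notational rather than conceptual: one must be careful to apply the two group-action axioms precisely (rather than implicitly treating $G \acts X$ as if elements of $G$ were functions being composed), and to make the symmetric inclusion argument without just writing "similarly". No new definitions or external facts are needed — only the group axioms (existence of $e$ and of inverses) and the two clauses of Definition~\ref{remark_group_actions}.
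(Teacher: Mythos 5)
Your proof is correct: the dichotomy argument via a shared element $z=gx=hy$, the two inclusions using $x=(g^{-1}h)y$ and its symmetric counterpart, and the coverage step $x=ex\in Gx$ are all sound applications of the two group-action axioms. The paper itself gives no proof of Proposition~\ref{prop:orbit_partition} --- it simply cites Rotman \cite{R95} --- and your argument is exactly the standard one, so there is nothing to compare beyond noting that you have supplied the routine details the paper omits.
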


\begin{definition}[Stabiliser]
	Let $G\acts X$ be a group action and $x\in X$. The \emph{stabilizer subgroup of $x$} is given by
	$G_x \defeq \{g\in G : gx=x \}$ and indeed is a subgroup of $G$.
\end{definition}

\begin{proposition}[Orbit-Stabiliser theorem, {\cite[Theorem 3.19]{R95}}]
	\label{prop:orbit_stabilizer}
	Let $G$ be a finite group acting on a set $X$. Let $x\in X$. Then the mapping $gG_x \mapsto gx$ is a bijection from $G/G_x$ to $Gx$. In particular, we have that $|Gx|\cdot |G_x| = |G|$.
\end{proposition}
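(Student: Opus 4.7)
The plan is to verify that the map $\varphi\colon G/G_x \to Gx$, $gG_x \mapsto gx$, is a well-defined bijection, and then to read off $|Gx|\cdot|G_x|=|G|$ from Lagrange's theorem applied to the subgroup $G_x \le G$.

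First I would handle well-definedness and injectivity in one stroke by establishing the equivalence $gG_x = hG_x \iff gx = hx$ for all $g,h\in G$. The left side is equivalent to $h^{-1}g \in G_x$, which by the definition of the stabiliser means $(h^{-1}g)x = x$. Using axiom (ii) of Definition~\ref{remark_group_actions} this rewrites as $h^{-1}(gx) = x$, and applying $h$ via axiom (ii) once more, together with axiom (i), transforms this into $gx = hx$. All implications are reversible, so distinct cosets map to distinct orbit elements, while equivalent representatives of the same coset map to the same orbit element. Surjectivity is then immediate from the definition $Gx = \{gx : g\in G\}$: every orbit element arises as $\varphi(gG_x)$ for a suitable $g$.

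With $\varphi$ established as a bijection, $|Gx| = |G/G_x|$. Since $G$ is finite and $G_x$ is a subgroup of $G$ (as recorded in the Stabiliser definition), Lagrange's theorem gives $|G/G_x| = |G|/|G_x|$, and the desired identity $|Gx|\cdot|G_x| = |G|$ follows after multiplying through.

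I do not anticipate any substantive obstacle — this is the standard textbook argument. The only step requiring care is the chain of equivalences in the well-definedness/injectivity argument, where one must invoke both axioms of a group action (not merely one) to pass between $(h^{-1}g)x = x$ and $gx = hx$. Everything else is formal.
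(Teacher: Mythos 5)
Your argument is correct and is precisely the standard textbook proof of the Orbit--Stabiliser theorem; the paper itself states this proposition without proof, citing Rotman \cite[Theorem 3.19]{R95}, where essentially the same argument (well-definedness and injectivity via $gG_x = hG_x \iff gx = hx$, surjectivity by definition of the orbit, and Lagrange's theorem for the counting statement) appears. No gaps.
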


\begin{proposition}[Cauchy-Frobenius lemma, {\cite[Theorem 3.22]{R95}}]
	\label{prop:cauchy_frobenius}
	Let $G$ be a finite group acting on a set $X$. Then the amount of orbits is given by
	$\frac{1}{|G|}\sum_{g\in G} |\{x\in X: gx=x\}|.$
\end{proposition}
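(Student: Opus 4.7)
The plan is to apply a standard double-counting argument to the set
\[
F \defeq \{(g,x)\in G\times X : gx = x\},
\]
using the two propositions that have already been set up (orbit partition and Orbit-Stabiliser). The final division by $|G|$ then gives the claimed formula.

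First I would count $|F|$ by fixing the first coordinate: this directly yields
\[
|F| = \sum_{g\in G} |\{x\in X : gx = x\}|,
\]
which is exactly the right-hand side of the claim multiplied by $|G|$. Next I would count $|F|$ by fixing the second coordinate, which gives $|F| = \sum_{x\in X} |G_x|$. Applying Proposition~\ref{prop:orbit_stabilizer} in the form $|G_x| = |G|/|Gx|$ rewrites this as
\[
|F| = |G|\sum_{x\in X} \frac{1}{|Gx|}.
\]

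The key step is then to evaluate $\sum_{x\in X} 1/|Gx|$. By Proposition~\ref{prop:orbit_partition}, the orbits of $G\acts X$ form a partition of $X$, so I can split the sum over orbits: for any orbit $O$, every $x\in O$ satisfies $Gx = O$, so the contribution of $O$ to the sum equals $\sum_{x\in O} 1/|O| = 1$. Hence $\sum_{x\in X} 1/|Gx|$ equals the number of orbits, call it $N$. Combining, $|F| = |G|\cdot N$.

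Equating the two expressions for $|F|$ and dividing by $|G|$ (finite and nonzero) yields
\[
N = \frac{1}{|G|}\sum_{g\in G} |\{x\in X : gx = x\}|,
\]
as claimed. There is no real obstacle here: the only subtlety is the transition from a sum over elements to a sum over orbits, which hinges on orbits being a partition (Proposition~\ref{prop:orbit_partition}) and on $|Gx|$ being constant on each orbit, both of which are immediate. Finiteness of $G$ is needed to apply the Orbit-Stabiliser theorem in its multiplicative form.
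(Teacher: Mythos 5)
Your proof is correct; it is the standard double-counting argument for the Cauchy--Frobenius (Burnside) lemma, and it uses exactly the two ingredients the paper sets up beforehand (Propositions~\ref{prop:orbit_partition} and~\ref{prop:orbit_stabilizer}). The paper itself gives no proof but cites Rotman \cite[Theorem 3.22]{R95}, where the argument is essentially the one you wrote, so there is nothing to add.
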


\section{Results}
\label{chap:pl}

In this section, we investigate the complexity of enumerating all satisfying teams for various fragments of team-based propositional logic. 
After introducing the problem \enumteampl{} and its parametrised version \penumteampl{} we develop two enumeration algorithms for $\pdlnovee$, either guaranteeing polynomial delay or incremental delay in polynomial space. 

\begin{problem}
	\label{problem_enumteam}
	Let $\Phi$ be a class of team-based propositional formulas and $f\colon\natural \rightarrow \natural$ be a computable function. Then we define 
	$\enumteampl(\Phi,\ f) \defeq (\Phi,\ \sol)$
	where
	$$\sol(\varphi) \defeq \left\{\emptyset \neq T\in\setteam{\var(\varphi)} : T\models \varphi,\ |T| \leq f(|\varphi|)\right\}\quad \text{ for } \varphi\in\Phi.$$
\end{problem}
As we are interested in non-empty teams as solutions, we excluded the $\emptyset$ from the set of all solutions.
Nevertheless, formally by the \emph{empty team property}, it always holds that $\emptyset\models\varphi$.

\begin{problem}
Let $\Phi$ be a class of team-based propositional formulas and $f\colon\natural \rightarrow \natural$ a computable function. Then
	$\penumteampl(\phi) \defeq (\Phi\times\natural,\ \kappa,\ \sol)$
	where $\kappa((\varphi, k)) \defeq k$ and
	\begin{align*}
		\sol((\varphi, k)) \defeq \left\{\emptyset \neq T\in\setteam{\var(\varphi)} : T\models \varphi,\ |T| \leq k\right\} \text{ for }(\varphi,k)\in\Phi\times\natural.
	\end{align*} 
\end{problem}
We write \enumteampl($\Phi$) for \enumteampl($\Phi$, $n \mapsto 2^n$). Since $|T| \leq 2^{|\varphi|}$ holds for every team $T$ for $\varphi$, we effectively eliminate the cardinality constraint.
 As we shall see, the order in which the teams are outputted plays an important role in the following reasoning. There are two natural orders on teams to consider.

\begin{definition} [Order of cardinality]
	\label{defi:size_order}
	Let $R,S$ be two teams. Then we define a partial order on the set of all teams by
	$R \leq_\text{size} S \defequiv |R| < |S| \text{ or } R=S.$
\end{definition}

When a formula $\varphi$ is given, we assume to have a total order $\leq$ on $\setassign{\var(\varphi)}$ such that comparing two elements is possible in $\mathcal{O}(|\var(\varphi)|)$ and iterating over the set of all assignments is feasible with delay $\mathcal{O}(|\var(\varphi)|)$. When interpreting each assignment as a binary encoded integer, we obtain an appropriate order on $\setassign{\var(\varphi)}$ by translating the order on $\natural_0$. If necessary, one could demand that adjacent assignments differ in only one place by using the order induced by the Gray code. Now we are able to define the second order.

\begin{definition} [Lexicograph.\ order]
	\label{defi:lex_order}
	Let $R=\{r_1,\dots,r_n\},$ $S=\{s_1,\dots,s_m\}$ be two teams such that $r_1 < \cdots < r_n$ and $s_1 < \cdots < s_m$. Let $i$ be the maximum over all $j\in \natural_0$ such that $j\leq \min(n,m),\ r_\ell = s_\ell$ for all $\ell\in\{1,\dots ,j\}$. Then we define a partial order on $\setteam{\var(\varphi)}$ by
	$$R \leq_\text{lex} S \defequiv
	\begin{cases}
		n \leq m, & i = \min(n,m) \\
		r_{i+1} < s_{i+1}, & \text{else}.
	\end{cases}$$
\end{definition}

Observe that the lexicographical order is a total order that does not extend the order of cardinality. 
For example, we have $\{00, 01, 10\}<_\text{lex}\{00,10\}$ when assignments are ordered according to their integer representation.

\begin{problem} 
	Let $\Phi$ be a class of team-based propositional formulas and $f\colon\natural \rightarrow \natural$ be a computable function. We define
	$\enumteamplsize(\Phi,\ f) \defeq (\Phi,\ \sol,\ \leq_\text{size})$
	with $\sol$ as in Problem \ref{problem_enumteam}. \penumteamplsize{} is defined accordingly.
\end{problem}

\subsection{Enumeration in Poor Man's Propositional Dependence Logic}
\label{sec:enumerating_pdlnovee}

Now, we start with the task of enumerating satisfying teams for the fragment $\pdlnovee$, i.e., Poor Man's Propositional Logic.
The delay of the resulting algorithm is polynomial regarding the size of the input and the maximal size of an  outputted teams. 
As teams may grow exponentially large according to the input size, the delay will not be polynomial in the classical sense of \delayP{}. 
As a result, we proceed to \delayFPT{} and set the maximal cardinality of outputted teams as the parameter. 
Note that the drawback of having a polynomial delay in the output is minor. 
When following algorithms process the outputted teams, they have to input them first, requiring at least linear time in the output size.

In fact, we will see that we cannot obtain a \delayP-algorithm when the output is sorted by cardinality. 
This sorting, however, is an inherent characteristic of our algorithm as satisfying teams of cardinality $k$ are constructed by analysing those of cardinality $k-1$.

Before diving into details, we would like to introduce some notation used in this section.
	Let $\varphi\in\pdlnovee$ be fixed, $k\in\natural_0$, 
\begin{align*}
		n &\defeq |\var(\varphi)|, \\
		\setsatteams{k} &\defeq \left\{T\in\setteam{\var(\varphi)}: T\models\varphi, |T|=k\right\}, \\
		\setsatteamsz{k} &\defeq \left\{T\in\setsatteams{k}: (\forall x\in\var(\varphi): x\mapsto 0)\in T \right\}, \\
		\cardsatteams{k} &\defeq |\setsatteams{k}|, \\
		\cardsatteamsz{k} &\defeq |\setsatteamsz{k}|.
\end{align*}
	An assignment $s\in\setassign{\var(\varphi)}$ is depicted as a sequence of $0$ and $1$, precisely:
	$s = s(x_1)s(x_2)\dots s(x_n).$

\begin{example}
	For $\varphi \defeq \dep{x_1,x_2}$ we have: $n = 2$ and consequently 
	\begin{align*}
		\setsatteams{2} &= \{ \{00,10\}, \{00,11\}, \{01,10\}, \{01,11\} \},\\ 
		\setsatteamsz{2} &=  \{ \{00,10\}, \{00,11\} \},\\
		\setsatteams{3} &= \setsatteamsz{3} = \emptyset.
	\end{align*}
\end{example}


Note that formulas of the form $\varphi \equiv \left(\bigwedge_{x\in I} x\right) \wedge \left(\bigwedge_{x\in J} \neg x\right) \wedge \left( \bigwedge_{\ell\in L} \dep{P_\ell,Q_\ell}\right)$ can be simplified w.l.o.g.\ to 
\begin{align*}
\varphi' \defeq \bigwedge_{\ell\in L} \dep{P_\ell',Q_\ell'}\quad\text{with}\quad P_\ell' \defeq P_\ell\setminus(I\cup J),\ Q_\ell' \defeq Q_\ell\setminus(I\cup J). \tag{$\star$}	
\end{align*}
Then all satisfying teams for $\varphi$ can be recovered by extending those for $\varphi'$. For instance, the formula $$x_3 \wedge \dep{\{x_1\}, \{x_2,x_3\}} \wedge \dep{\{x_4\}, \{x_2, x_3\}}$$ may be reduced to $\dep{x_1, x_2} \wedge \dep{x_4, x_2}$. The team $\{00\!-\!0,00\!-\!1\}$ satisfies the latter formula (`$-$' indicates the missing $x_3$) and is extended to $\{0010,0011\}$ in order to satisfy the former one.\label{ex:robot_soccer_reduced}

\subsubsection{The group action of flipping bits}
By the semantics of $\dep{\cdot}$ we see that flipping the bit at a fixed position in all assignments of a team $T$ is an invariant for $T\models \dep{P, Q}$. For example, the teams $\{00,10\}$ and $\{00,11\}$ satisfy $\dep{x_1,x_2}$. The remaining $2$-teams satisfying the formula are given by $\{01,11\}$ and $\{01,10\}$. Note that these teams may be constructed from the previous ones by flipping the value of $x_2$. Accordingly, it would be enough to compute the satisfying teams $\{00,10\}$ and $\{00,11\}$, constructing the other $2$-teams by flipping bits. The concept of computing a minor set of satisfying $k$-Teams and constructing the remaining ones by flipping bits is the main concept of our algorithm for ensuring \fpt{}-delay.
	
	By identifying each assignment $s$ with the vector $(s(x_1),\dots,s(x_n))$ we obtain a bijection of sets $\ftwo^n \leftrightarrow \setassign{\var(\varphi)}.$
	We will switch between interpreting an element as an assignment or an $\ftwo$-vector as necessary, leading to expressions like $s+t$ for assignments $s$ and $t$. Those may seem confusing at first, but become obvious when interpreting $s$ and $t$ as vectors. Vice versa, we will consider $\ftwo$-vectors as assignments that may be contained in a team. When both notations are to be used, this is indicated by taking $s\in\ftwo^n \cong \setassign{\var(\varphi)}$ instead of simply writing $s\in\ftwo^n$ or $s\in\setassign{\var(\varphi)}$.

\begin{definition}[Group action of flipping bits]
	By the observation after Def.~\ref{remark_group_actions} the group action of ($\ftwo^n,+$) on itself induces a group action of $\ftwo^n$ on $\pow{\ftwo^n}$. On that account we obtain a group action $\ftwo^n\acts\setteam{\var(\varphi)}$, called \emph{group action of flipping bits}.
\end{definition}

Let $e_i$ be the $i$-th standard vector of $\ftwo^n$. 
Then the operation of $e_i$ on $\setteam{\var(\varphi)}$ corresponds to flipping the value for $x_i$ in each assignment of a team.

\begin{theorem}
	\label{theorem:orbit_invariance}
	Let $k\in\natural$. The restriction of $\ftwo^n\acts\pow{\ftwo^n}$ on $\setsatteams{k}$ yields a group action $\ftwo^n\acts\setsatteams{k}$.
\end{theorem}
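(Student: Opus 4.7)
The plan is to establish closure of $\setsatteams{k}$ under the ambient action $\ftwo^n\acts\pow{\ftwo^n}$: once $gT\in\setsatteams{k}$ is proved for every $T\in\setsatteams{k}$ and $g\in\ftwo^n$, the identity and associativity axioms of Def.~\ref{remark_group_actions} are automatically inherited from the ambient action. Cardinality preservation $|gT|=|T|=k$ was already observed after Def.~\ref{remark_group_actions}, so the entire content of the theorem reduces to showing that $gT\models\varphi$ whenever $T\models\varphi$. Invoking the reduction $(\star)$, I may assume $\varphi=\bigwedge_{\ell\in L}\dep{P_\ell,Q_\ell}$ is a pure conjunction of dependence atoms; since satisfaction of a conjunction is pointwise, the task boils down to showing that a single dependence atom $\dep{P,Q}$ is invariant under the action.

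The core observation is that $\ftwo$-addition is a coordinatewise bijection, so for arbitrary $g,s,t\in\ftwo^n$ and $P\subseteq\var(\varphi)$,
\[
(g+s)\restrict{P}=(g+t)\restrict{P}\iff g\restrict{P}+s\restrict{P}=g\restrict{P}+t\restrict{P}\iff s\restrict{P}=t\restrict{P}.
\]
Given $T\models\dep{P,Q}$ and $s',t'\in gT$, written as $s'=g+s$ and $t'=g+t$ with $s,t\in T$, the hypothesis $s'\restrict{P}=t'\restrict{P}$ is equivalent to $s\restrict{P}=t\restrict{P}$, which by $T\models\dep{P,Q}$ yields $s\restrict{Q}=t\restrict{Q}$, and this in turn is equivalent to $s'\restrict{Q}=t'\restrict{Q}$. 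Hence $gT\models\dep{P,Q}$, and the theorem follows.

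The only genuine obstacle is making sure the reduction $(\star)$ has been applied first: a literal such as $x_i$ is obviously not preserved by flipping the $i$-th bit (a team satisfying $x_i$ becomes one satisfying $\neg x_i$ under the action of the standard vector $e_i$), so without eliminating literals one cannot hope to close $\setsatteams{k}$ under the full group $\ftwo^n$. After the reduction, the entire argument boils down to the short algebraic verification in $\ftwo^n$ sketched above, and the translation invariance of the dependence atom semantics does the rest.
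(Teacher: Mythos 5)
Your proposal is correct and follows essentially the same route as the paper: both reduce to showing closure $zT\in\setsatteams{k}$ (the group-action axioms being inherited), invoke the cardinality remark after Def.~\ref{remark_group_actions}, assume $\varphi$ is in the form $(\star)$, and verify invariance of a single dependence atom under the flip; your all-at-once cancellation in $\ftwo^n$ is just a compressed version of the paper's bit-by-bit flipping argument. Your closing remark that literals must be eliminated first correctly pinpoints why the reduction $(\star)$ is indispensable, a point the paper leaves implicit.
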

\begin{proof}
	As the axioms of group actions still hold on a subset of $\pow{\ftwo^n}$, it remains to show that
	$zT \in \setsatteams{k}\quad \forall z\in\ftwo^n,\ T\in\setsatteams{k}.$
	Let $z\in\ftwo^n$ and $T\in\setsatteams{k}$. By the remark following Definition~\ref{remark_group_actions} we have $|zT| = k$. Let $P\subseteq\var(\varphi)$ and $s,t\in\setassign{\var(\varphi)}$. If $s', t'\in\setassign{\var(\varphi)}$ arise from $s,t$ by flipping the value for a variable $x_i$, then obviously
	$s\restrict{P} = t\restrict{P} \Leftrightarrow s'\restrict{P} = t'\restrict{P}.$
	It follows that
	$T \models \dep{P,Q} \Leftrightarrow zT \models \dep{P,Q}$ for all $P,Q\subseteq\var(\varphi).$
	When assuming that $\varphi$ has the form of ($\star$), it clearly holds that $zT \models \varphi$ because of $T\models \varphi$. This proves $zT \in \setsatteams{k}$.
\end{proof}

\begin{lemma}
	\label{lemma:representative_system_in_tzero}
	Let $T\in\setsatteams{k}$, $k\in\natural$. Then, we have that
	$\ftwo^n T \cap \setsatteamsz{k} \neq \emptyset.$
	For this reason $\setsatteamsz{k}$ contains a representative systems for the orbits of $\ftwo^n\acts\setsatteams{k}$.
\end{lemma}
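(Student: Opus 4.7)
The plan is to exhibit, for any $T\in\setsatteams{k}$, an explicit group element $z\in\ftwo^n$ such that $zT$ contains the all-zero assignment. The natural candidate is to take $z$ to be any assignment already present in $T$.

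Concretely, since $k\ge 1$, the team $T$ is non-empty, so I can pick some $s\in T$. I then set $z \defeq s$ and consider $zT = \{s+t : t\in T\}$. Because addition in $\ftwo^n$ is self-inverse, $s+s = 0$, so the all-zero vector lies in $zT$. This means $zT$ contains the assignment sending every variable in $\var(\varphi)$ to $0$, which is exactly the defining condition of $\setsatteamsz{k}$.

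It remains to argue that $zT$ is still a satisfying team of size $k$, i.e., lies in $\setsatteams{k}$. But this is exactly what Theorem~\ref{theorem:orbit_invariance} gives: the group action restricts to $\setsatteams{k}$, so $zT\in\setsatteams{k}$. Combined with $0\in zT$, we obtain $zT\in\ftwo^n T\cap\setsatteamsz{k}$, which proves the first claim.

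For the second claim about representative systems, the proof is immediate from Proposition~\ref{prop:orbit_partition}: the orbits partition $\setsatteams{k}$, and the first claim shows every orbit meets $\setsatteamsz{k}$. Picking one element per orbit from this intersection yields a representative system contained in $\setsatteamsz{k}$. I expect no genuine obstacle here; the whole argument hinges on the simple observation that $\ftwo^n$ acts by translation and therefore any element of a team can be used to translate that team to contain the identity.
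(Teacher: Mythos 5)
Your proposal is correct and is essentially identical to the paper's own (one-line) proof: pick any $s\in T$, act by $s$ so that $s+s=\vec{0}$ lands in $sT$, and invoke the invariance of $\setsatteams{k}$ under the group action. No further comment is needed.
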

\begin{proof}
	Take $s\in T\subseteq\setassign{\var(\varphi)}\cong \ftwo^n$. Then $sT \in \setsatteamsz{k}$ because of $z + z = \vec{0}$ for all $z\in\ftwo^n$.
\end{proof}

The previous lemma states that we can compute $\setsatteams{k}$ from $\setsatteamsz{k}$ by generating orbits. Next we want to present and analyse an algorithm for enumerating those orbits. The results are given in Theorem \ref{theorem:generating_orbits}.

\begin{definition}
	Let $\vec{0}\neq s = (s_1,\dots,s_n)\in\ftwo^n$ and $\mathcal{B}\subseteq\ftwo^n\setminus\{\vec{0}\}$. Then we define
\iflong
	\begin{align*}
	\last(s) &\defeq \max \left\{i\in\{1,\dots,n\} : s_i = 1\right\}, \\
	\last(\mathcal{B}) & \defeq \{\last(s) : s\in \mathcal{B}\}.
	\end{align*}
\else
$\last(s) \defeq \max \left\{i\in\{1,\dots,n\} : s_i = 1\right\}$, and 
$\last(\mathcal{B})  \defeq \{\last(s) : s\in \mathcal{B}\}$.
\fi
\end{definition}

\begin{definition}
	Let $\mathcal{B}$ be a subset of $\ftwo^n$. Then the subspace generated by $\mathcal{B}$ is defined by	
\iflong$$\text{span}(\mathcal{B}) \defeq \{b_1+\cdots + b_r : r\in\natural_0,\ b_i\in\mathcal{B}\ \forall i\in\{1,\dots,r\}\}.$$
\else$\text{span}(\mathcal{B}) \defeq \{b_1+\cdots + b_r : r\in\natural_0,\ b_i\in\mathcal{B}\ \forall i\in\{1,\dots,r\}\}.$\fi
\end{definition}

\begin{lemma}
	\label{lemma:distinct_last_base}
	Let $U$ be a subspace of the $\ftwo$-vector space $\ftwo^n$. Let $\mathcal{B}\subseteq U\setminus\{\vec{0}\}$ be a maximal subset with
	\begin{equation}
	\label{equa:distinct_last_base}
	b \neq b' \Rightarrow \last(b)\neq\last(b')\quad\forall b,b'\in \mathcal{B}.
	\end{equation}
	Then $\mathcal{B}$ is a basis for $U$.
\end{lemma}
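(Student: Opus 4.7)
The plan is to establish the two defining properties of a basis separately: linear independence and spanning $U$.

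For linear independence, I would argue by contradiction. Suppose there are pairwise distinct $b_1,\dots,b_r\in\mathcal{B}$ with $b_1+\cdots+b_r = \vec 0$. Since all $\last(b_i)$ are pairwise distinct (property~\eqref{equa:distinct_last_base}), let $i_0$ be the unique index maximising $\last(b_i)$. Then at coordinate $\last(b_{i_0})$ the vector $b_{i_0}$ has entry $1$ while every other $b_i$ has entry $0$ (by definition of $\last$). Hence the sum has a $1$ at this coordinate, contradicting that it equals $\vec 0$. This shows $\mathcal{B}$ is linearly independent.

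For the spanning part, which I expect to be the slightly trickier step, I would argue again by contradiction, exploiting maximality of $\mathcal{B}$. Suppose there exists $u \in U \setminus \text{span}(\mathcal{B})$, and consider
\[
S \defeq \{\, u + \textstyle\sum_{b\in\mathcal{B}'} b : \mathcal{B}'\subseteq\mathcal{B}\,\} \setminus \{\vec 0\}.
\]
Note that $S$ is non-empty (taking $\mathcal{B}'=\emptyset$ gives $u\neq\vec 0$), and every element of $S$ lies in $U\setminus\text{span}(\mathcal{B})$, since adding an element of $\text{span}(\mathcal{B})$ to $u$ cannot bring it into $\text{span}(\mathcal{B})$. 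Choose $v\in S$ with $\last(v)$ minimal. I then distinguish two cases on $\last(v)$.

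If $\last(v)\in\last(\mathcal{B})$, pick the (unique) $b\in\mathcal{B}$ with $\last(b)=\last(v)$. Then $v+b$ is again of the shape $u + \sum_{b\in\mathcal{B}''} b$ for some $\mathcal{B}''\subseteq\mathcal{B}$ (toggle membership of $b$ in $\mathcal{B}'$), so $v+b\in S$ unless $v+b=\vec 0$; but $v+b=\vec 0$ would give $v=b\in\text{span}(\mathcal{B})$, contradicting $v\in S$. Also, cancellation at coordinate $\last(v)$ forces $\last(v+b)<\last(v)$, contradicting minimality of $\last(v)$. Hence $\last(v)\notin\last(\mathcal{B})$. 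But then $\mathcal{B}\cup\{v\}$ is a strictly larger subset of $U\setminus\{\vec 0\}$ still satisfying \eqref{equa:distinct_last_base}, contradicting the maximality of $\mathcal{B}$. This contradiction shows $U=\text{span}(\mathcal{B})$, completing the proof.
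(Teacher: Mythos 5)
Your proof is correct and follows essentially the same route as the paper's: linear independence via the coordinate of the largest $\last$-value (the paper packages this as an induction, you argue directly), and spanning via a minimal-$\last$ descent combined with the maximality of $\mathcal{B}$ (the paper minimises over all of $U\setminus\text{span}(\mathcal{B})$ rather than within a single coset, but the reduction step $s\mapsto s-b$ is identical). No gaps.
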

\begin{proof}
	First we show that any set $A\subseteq U\setminus\{\vec{0}\}$ satisfying (\ref{equa:distinct_last_base}) is linearly independent. We conduct an induction over $|A|$. For $|A|=1$ the claim is obvious. Because of (\ref{equa:distinct_last_base}) there exists an element $a_0\in A$ with $\last(a_0) > \last(a)$ for all $a_0\neq a\in A$. When considering the $\last(a_0)$-th component, clearly the equation
	$$a_0 = \sum_{a_0\neq a\in A} \lambda_a a, \quad \lambda_a\in\ftwo$$
	has no solution. As $A\setminus\{a_0\}$ is linearly independent by induction hypothesis, it follows that A is linearly independent.
	
	Now assume that $\mathcal{B}$ does not generate $U$. We take an element $s\in U\setminus\text{span}(\mathcal{B})$ with minimal $\last(s)$. 
	As $\mathcal{B}$ is a maximal subset fulfilling (\ref{equa:distinct_last_base}), we have that $\last(b) = \last(s)$ for a suitable element $b\in \mathcal{B}$. 
	But then $s-b\in U\setminus\text{span}(\mathcal{B})$ with $\last(s-b) < \last(s)$ contradicts the minimality of $s$.
\end{proof}

\begin{theorem}
	\label{theorem:generating_orbits}
	Let $T\in\setsatteams{k}$, $k\in\natural$. Then $\ftwo^nT$ can be enumerated with delay $\bigo(k^3n)$.
\end{theorem}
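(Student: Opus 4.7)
The plan is to invoke the Orbit-Stabiliser theorem (Proposition~\ref{prop:orbit_stabilizer}): the orbit $\ftwo^n T$ is in bijection with $\ftwo^n/G_T$, where $G_T = \{z\in\ftwo^n : zT = T\}$ is the stabiliser of $T$, which happens to be an $\ftwo$-subspace of $\ftwo^n$. Enumerating the orbit therefore reduces to enumerating a transversal of the cosets of $G_T$, which can be done with $O(k)$ incremental work per coset in Gray-code order (plus output). The $k^3n$ factor comes from a one-off precomputation of $G_T$; everything else is cheaper.

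First I would compute $G_T$ explicitly. Fix any $s_0\in T$; if $z\in G_T$ then $s_0+z\in zT = T$, so $z\in \{s_0+t : t\in T\}$. Hence $G_T$ has at most $k$ candidate elements, which I try one by one. Sorting $T$ once in $O(k\log k\cdot n)$ time lets me verify each candidate $z = s_0 + t$ by testing $zT \subseteq T$ via element-wise membership in $T$ in $O(k\log k\cdot n)$, so computing $G_T$ takes $O(k^2\log k\cdot n) \subseteq O(k^3n)$ in total. From the resulting list, Lemma~\ref{lemma:distinct_last_base} yields a basis $\{g_1,\dots,g_r\}$ of $G_T$ by greedily selecting stabiliser elements with pairwise distinct $\last$ values; a standard row reduction then turns it into a reduced echelon basis so that each pivot position $p_i := \last(g_i)$ is hit by exactly one basis vector. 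Since $|G_T| \leq k$, the dimension satisfies $r \leq \log_2 k$, and this phase costs $O(k^2 n)$.

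With $P := \{p_1,\dots,p_r\}$ fixed, set $R := \{z\in\ftwo^n : z_p = 0 \text{ for all } p\in P\}$. The reduced echelon structure forces $G_T\cap R = \{\vec 0\}$ and $G_T + R = \ftwo^n$, so $R$ is a transversal of $\ftwo^n/G_T$ of size $2^{n-r}$ and the map $z\mapsto zT$ is a bijection $R\to \ftwo^n T$. Now enumerate $R$ in Gray-code order over its $n-r$ free coordinates while maintaining the current team $zT$ explicitly: each step toggles a single free coordinate $j$, so the successor team is obtained by flipping bit $j$ in each of the $k$ assignments of the current team in $O(k)$ time, followed by output in $O(kn)$ time. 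Every delay after the first is therefore $O(kn)$, while the initial delay is dominated by the $O(k^3n)$ cost of computing $G_T$, yielding the claimed $\bigo(k^3n)$ worst-case bound. The main obstacle is the stabiliser computation and the observation $G_T \subseteq \{s_0+t : t\in T\}$ bounding it to $k$ candidates; once $G_T$ is in hand, the enumeration proper is a standard linear-algebra plus Gray-code exercise.
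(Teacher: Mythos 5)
Your proof is correct and follows essentially the same route as the paper's: both invoke the Orbit--Stabiliser theorem, observe that the stabiliser of $T$ is contained in the $k$-element set $\{s_0+t : t\in T\}$, extract a basis with distinct $\last$ values via Lemma~\ref{lemma:distinct_last_base}, and take the complement spanned by the standard vectors at the non-pivot positions as a transversal. Your refinements (sorted membership tests, the extra row reduction, and Gray-code incremental maintenance of $zT$) only tighten constants and are not needed for the $\bigo(k^3n)$ bound.
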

\begin{proof}
	W.l.o.g. let $T\in\setsatteamsz{k}$. Otherwise, consider the team $zT$ with an arbitrary $z\in T$. Note that $T$ may have a nontrivial stabilizer subgroup so that duplicates occur when simply applying each $z\in\ftwo^n$ to $T$. However, Proposition \ref{prop:orbit_stabilizer} states that we can enumerate the orbit of $T$ without duplicates when applying a representative system for $\ftwo^n/(\ftwo^n)_T$.
	
	When taking $\ftwo^n$ as a vector space over $\ftwo$, the subspaces of $\ftwo^n$ correspond to the subgroups of $(\ftwo^n,+)$. In view of this any basis for a complement of the stabilizer subgroup $(\ftwo^n)_T$ of $T$ in $\ftwo^n$ generates a representative system for $\ftwo^n/(\ftwo^n)_T$.
	
	Take a basis $\mathcal{B}$ of $(\ftwo^n)_T$ as in Lemma \ref{lemma:distinct_last_base}. Set
	$\mathcal{C} \defeq \{e_i : i\in\{1,\dots,n\}\setminus\last(\mathcal{B})\},$
	where $e_i$ denotes the $i$-th standard vector of $\ftwo^n$. By construction of $\mathcal{C}$ we can arrange the elements of $\mathcal{B}\cup\mathcal{C}$ so that the matrix containing these elements as columns has triangular shape with $1$-entries on its diagonal. Consequently $\mathcal{B}\cup \mathcal{C}$ is a basis for $\ftwo^n$ and $\mathcal{C}$ is a basis for a complement of $(\ftwo^n)_T$.
	Now it remains to construct $\mathcal{B}$ as desired. For $s\in\setassign{\var(\varphi)} \cong \ftwo^n$ we have
$$s\in(\ftwo^n)_T \Rightarrow sT = T \Rightarrow s = s + \vec{0} \in T.$$
As a result, we can compute $(\ftwo^n)_T$ by checking $sT = T$ for $|T|=k$ elements in $\ftwo^n$. In fact it is enough to check $sT\subseteq T$ as we have $|sT|=|T|$. We obtain $\mathcal{B}$ by inserting each element of $(\ftwo^n)_T\setminus\{\vec{0}\}$ preserving (\ref{equa:distinct_last_base}) into $\mathcal{B}$. This shows that Algorithm \ref{algo:enumerating_orbits} outputs $\ftwo^n T$ without duplicates. The delay is dominated by the precomputation phase (lines \ref{algline:enumerating_orbits_precalculation_start} to \ref{algline:enumerating_orbits_precalculation_end}), which is $\bigo(k^3n)$. Note that we sort the $k$ assignments of each team in ascending order before returning it.
\end{proof}
	\begin{algorithm}[tb]
		\caption{Enumerating orbits}
		\label{algo:enumerating_orbits}
		
		\KwIn{A team $T$ with $\vec{0}\in T$}
		\KwOut{The orbit $\ftwo^n T$ of $T$ where each outputted team is sorted}
		\BlankLine
		$\mathcal{B}_\last \leftarrow \emptyset$\Comment*[r]{Assume that $\mathcal{B}_\last$ is sorted}
		\label{algline:enumerating_orbits_precalculation_start}
		\For(\Comment*[f]{$< k$ iterations}){$\vec{0}\neq s\in T$}{
			\lIf(\Comment*[f]{$\bigo( n)$}){$\last(s)\in\mathcal{B}_\last$}{\Continue}
			failed $\leftarrow$ false\;
			\For(\Comment*[f]{$\leq k$ iterations}){$t\in T$}{
				\lIf(\Comment*[f]{$\bigo(kn)$}){$s+t\notin T$}{failed $\leftarrow$ true}
			}
			\lIf(\Comment*[f]{$\bigo(n)$}){\Not failed}{$\mathcal{B}_\last \leftarrow \mathcal{B}_\last \cup \{\last(s)\}$}
		}
		$\mathcal{C}_\last \leftarrow \{1,\dots,n\}\setminus\mathcal{B}_\last$\Comment*[r]{$\bigo(n)$}
		\label{algline:enumerating_orbits_precalculation_end}
		\For{$s\in \text{span}(\{e_i : i\in\mathcal{C}_\last\})$}{
			Compute $sT$\Comment*[r]{$\bigo(kn)$}
			Sort $sT$\Comment*[r]{$\bigo(kn \log k)$}
			\Output $sT$\;
		}
	\end{algorithm}	
\iflong
\begin{example}
	Let $n=3$ and $T=\{000,100,010,110\}$. Note that $T$ satisfies the reduced formula from page~\pageref{ex:robot_soccer_reduced}. We compute the orbit $\ftwo^3 T$ of $T$ by algorithm \ref{algo:enumerating_orbits}. We check $sT=T$ for all nonzero assignments $s$ in $T$:
	\begin{align*}
		100 \cdot T &= \{100,000,110,010\} = \{000,100,010,110\} = T, \\
		010 \cdot T &= \{010,110,000,100\} = \{000,100,010,110\} = T, \\
		110 \cdot T &= \{110,010,100,000\} = \{000,100,010,110\} = T.
	\end{align*}
	On that account we obtain
	\begin{align*}
		\mathcal{B}_\last &= \{\last(100), \last(010), \last(110)\} = \{1,2\}, \\
		\mathcal{C}_\last &= \{3\}, \\
		\text{span}(\{e_i : i\in\mathcal{C}_\last\}) &= \{000, 001\}.
	\end{align*}
	Then the orbit of $T$ is given by
	$000 \cdot T = \{000,100,010,110\}$ and
	$001 \cdot T = \{001,101,011,111\}.$
\end{example}\fi

Finally we would like to relate $\cardsatteams{k}$ to $\cardsatteamsz{k}$. The larger the quotient $\cardsatteams{k} / \cardsatteamsz{k}$, the more computation costs are saved by generating orbits instead of computing $\setsatteams{k}$ immediately.

\begin{theorem}
	\label{theorem:teams_per_zero_team}
	Let $k\in\natural$ with $\cardsatteams{k}\neq 0$. Then, we have that
	$\cardsatteams{k}/\cardsatteamsz{k} = 2^n/k.$
\end{theorem}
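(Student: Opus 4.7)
I would prove the identity by analyzing each orbit of the group action $\ftwo^n \acts \setsatteams{k}$ separately, showing that for every orbit $O$ the ratio $|O| / |O \cap \setsatteamsz{k}|$ equals the constant $2^n/k$, and then summing over orbits.

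The plan is as follows. First, fix a team $T \in \setsatteamsz{k}$ (Lemma~\ref{lemma:representative_system_in_tzero} guarantees every orbit meets $\setsatteamsz{k}$). I would observe two structural facts about the stabiliser $H := (\ftwo^n)_T$. Since $\vec{0} \in T$, any $s \in H$ satisfies $s = s + \vec{0} \in sT = T$, hence $H \subseteq T$. More strongly, for any $h \in H$ and $t \in T$ we have $h + t \in hT = T$, so $T$ is closed under translation by $H$; equivalently, $T$ is a union of cosets of the subgroup (and $\ftwo$-subspace) $H$ of $\ftwo^n$.

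Next I would count $|\ftwo^n T \cap \setsatteamsz{k}|$. A team $zT$ contains $\vec{0}$ iff $\vec{0} = z + t$ for some $t \in T$, i.e.\ iff $z \in T$ (using $-t = t$ in $\ftwo^n$). So the map $\Phi\colon T \to \ftwo^n T \cap \setsatteamsz{k}$, $z \mapsto zT$, is surjective. Its fibres are exactly $\{z' \in T : z + z' \in H\} = T \cap (z + H)$, and by the coset-closure property above each such fibre has size $|H|$. Therefore
\begin{equation*}
|\ftwo^n T \cap \setsatteamsz{k}| \;=\; \frac{|T|}{|H|} \;=\; \frac{k}{|H|}.
\end{equation*}
Combined with the Orbit-Stabiliser theorem (Proposition~\ref{prop:orbit_stabilizer}), which gives $|\ftwo^n T| = 2^n / |H|$, we obtain $|\ftwo^n T| = (2^n/k) \cdot |\ftwo^n T \cap \setsatteamsz{k}|$.

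Finally I would sum this identity over all orbits. By Proposition~\ref{prop:orbit_partition} the orbits partition $\setsatteams{k}$, and by Lemma~\ref{lemma:representative_system_in_tzero} the restrictions $O \cap \setsatteamsz{k}$ partition $\setsatteamsz{k}$. Summing $|O| = (2^n/k) \cdot |O \cap \setsatteamsz{k}|$ yields $\cardsatteams{k} = (2^n/k) \cdot \cardsatteamsz{k}$, which is the claim (valid because $\cardsatteams{k} \neq 0$ forces $\cardsatteamsz{k} \neq 0$ by Lemma~\ref{lemma:representative_system_in_tzero}). The only real subtlety is the counting step, i.e.\ recognising that $T$ is $H$-invariant so that the fibres of $\Phi$ are full cosets of $H$; the rest reduces to bookkeeping with orbit-stabiliser.
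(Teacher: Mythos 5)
Your proposal is correct and follows essentially the same route as the paper: both establish the key identity $|\ftwo^n T \cap \setsatteamsz{k}| = k/|(\ftwo^n)_T|$ by identifying that set with $\{sT : s\in T\}$ and counting each team with multiplicity $|(\ftwo^n)_T|$, then combine it with the Orbit-Stabiliser theorem and sum over the orbit partition. Your explicit observation that $T$ is a union of cosets of the stabiliser is a slightly cleaner way to justify the fibre count, but it is the same argument.
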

\begin{proof}
	Because of $\cardsatteams{k}\neq 0$ and Lemma \ref{lemma:representative_system_in_tzero} it follows that $\cardsatteamsz{k}\neq 0$. For this reason we can choose $T\in\setsatteamsz{k}$. We claim
	\begin{equation}
	\label{equa:orbit_zero_intersection_cardinality}
	|\ftwo^n T \cap \setsatteamsz{k}| = \frac{k}{|(\ftwo^n)_T|}.
	\end{equation}
	For any $s\in\setassign{\var(\varphi)} \cong \ftwo^n$ we have that
	\begin{equation}
	\label{equa:orbit_zero_intersection}
	sT \in \setsatteamsz{k} \Leftrightarrow  \exists t\in T : s + t = \vec{0} \Leftrightarrow \exists t\in T : s = t \Leftrightarrow s\in T.
	\end{equation}
	Consequently we have 
	$\ftwo^n T \cap \setsatteamsz{k} = \left\{ sT : s\in T \right\} \eqdef TT.$
	Let $r,s\in T$. Both elements yield the same team $rT=sT$ iff $s \in r(\ftwo^n)_T$ so that for any fixed $r\in T$ we find exactly $|r(\ftwo^n)_T|=|(\ftwo^n)_T|$ ways of expressing $rT$ in the form of $sT$, where $s\in T$ by (\ref{equa:orbit_zero_intersection}). When iterating over the $k$ elements $sT$, $s\in T$, each team in $TT$ is counted $|(\ftwo^n)_T|$ times. It follows that
	$$|TT| = \frac{k}{|(\ftwo^n)_T|},$$
	proving (\ref{equa:orbit_zero_intersection_cardinality}).
	
	By Lemma \ref{lemma:representative_system_in_tzero} we find a representative system $R\subseteq\setsatteamsz{k}$ for the orbits of $\ftwo^n\acts \setsatteams{k}$. With Equation (\ref{equa:orbit_zero_intersection_cardinality}) and the Orbit-Stabilizer theorem (see Proposition \ref{prop:orbit_stabilizer}) we obtain
	\begin{alignat*}{2}
		\cardsatteams{k} 
		&= \sum_{T\in R} |\ftwo^n T| &&\qquad\text{(by Proposition \ref{prop:orbit_partition})} \\
		&= \sum_{T\in \setsatteamsz{k}} \frac{|\ftwo^n T|}{|\ftwo^n T \cap \setsatteamsz{k}|}\\
		&= \sum_{T\in \setsatteamsz{k}} \frac{|(\ftwo^n)_T|}{k} \cdot |\ftwo^n T| &&\qquad\text{(by (\ref{equa:orbit_zero_intersection_cardinality}))} \\
		&= \sum_{T\in \setsatteamsz{k}} \frac{|(\ftwo^n)_T|}{k} \cdot \frac{2^n}{|(\ftwo^n)_T|} &&\qquad\text{(by Proposition \ref{prop:orbit_stabilizer})} \\
		&= \frac{2^n}{k} \sum_{T\in \setsatteamsz{k}} 1\\
		&= \frac{2^n}{k} \cardsatteamsz{k}.
	\end{alignat*}
\end{proof}
\iflong
\begin{example}
	\label{ex:robot_soccer_orbits}
	Consider the reduced formula
	$\varphi \defeq \dep{x_1, x_3} \wedge \dep{x_2, x_3}$
	from page~\pageref{ex:robot_soccer_reduced}. Then the orbits of $\setsatteams{k}$, $k\in\natural$, and their corresponding stabiliser subgroups are given in Figure~\ref{fig:robot_soccer_orbits}. Teams located in $\setsatteamsz{k}$ are coloured red. Note that the amount of red teams in each orbit of $\setsatteams{k}$ matches $k$ divided by the cardinality of the stabiliser subgroup. Furthermore we have
	\begin{alignat*}{2}
		\frac{\cardsatteams{1}}{\cardsatteamsz{1}} = \frac{8}{1} = \frac{2^3}{1},\quad 
		\frac{\cardsatteams{2}}{\cardsatteamsz{2}} = \frac{16}{4} = \frac{2^3}{2},\quad 
		\frac{\cardsatteams{3}}{\cardsatteamsz{3}} = \frac{8}{3} = \frac{2^3}{3},\quad 
		\frac{\cardsatteams{4}}{\cardsatteamsz{4}} = \frac{2}{1} = \frac{2^3}{4}.
	\end{alignat*}
	
	\begin{figure}
\centering
\begin{tikzpicture}[   
  thick,
  team/.style={rounded corners, fill=blue!20, align=center, draw},
  teamz/.style={rounded corners, fill=red!40, align=center, draw},
  orbit/.style={very thick, draw},
  frame/.style={thick, draw},
  scale=0.93
]
	\coordinate (T1_1) at (-2.75,0);
	\coordinate (T2_1) at (-3,-4.75);
	\coordinate (T2_2) at (3,-4.75);
	\coordinate (T2_3) at (-3,-9);
	\coordinate (T2_4) at (3,-9);
	\coordinate (T3_1) at (0,-14.75);
	\coordinate (T4_1) at (3.5,0);
	
	\draw[frame] (-6.75,2.25) node[below right] {$\setsatteams{1}$} rectangle (1,-3);
	\draw[frame] (1,2.25) node[below right] {$\setsatteams{4}$} rectangle (6.75,-3);
	\draw[frame] (-6.75,-3) node[below right] {$\setsatteams{2}$} rectangle (6.75,-11.5);
	\draw[frame] (-6.75,-11.5) node[below right] {$\setsatteams{3}$} rectangle (6.75,-18.75);
	
	\newcommand*{\axisH}{2.75}
	\newcommand*{\axisV}{1.5}
	\newcommand*{\countElems}{8}
	
	\path[orbit] 
		(T1_1) ellipse({\axisH} and {\axisV}) 
		+(0,{-0.75-\axisV}) node {$(\ftwo^3)_T = \{\{000\}\}$};
	\path (T1_1)
		\foreach \t/\s [count=\i] in {
			$\{000\}$/teamz, 
			$\{100\}$/team, 
			$\{010\}$/team, 
			$\{110\}$/team, 
			$\{001\}$/team, 
			$\{101\}$/team, 
			$\{011\}$/team, 
			$\{111\}$/team} {
	      +({\axisH*cos(90-(\i-1)*360/\countElems)}, {\axisV*sin(90-(\i-1)*360/\countElems)}) node[\s] {\t}
		};
		
	\renewcommand*{\axisH}{1.75}
	\renewcommand*{\axisV}{1}
	\renewcommand*{\countElems}{4}
	
	\path[orbit] 
	(T2_1) ellipse({\axisH} and {\axisV}) 
	+(0,{-0.75-\axisV}) node {$(\ftwo^3)_T = \{\{000\},\{100\}\}$};
	\path (T2_1)
	\foreach \t/\s [count=\i] in {
		{$\{000,100\}$}/teamz, 
		{$\{010,110\}$}/team, 
		{$\{001,101\}$}/team, 
		{$\{011,111\}$}/team} {
		+({\axisH*cos(90-(\i-1)*360/\countElems)}, {\axisV*sin(90-(\i-1)*360/\countElems)}) node[\s] {\t}
	};
	
	\path[orbit] 
	(T2_2) ellipse({\axisH} and {\axisV})  
	+(0,{-0.75-\axisV}) node {$(\ftwo^3)_T = \{\{000\},\{010\}\}$};
	\path (T2_2)
	\foreach \t/\s [count=\i] in {
		{$\{000,010\}$}/teamz, 
		{$\{100,110\}$}/team, 
		{$\{001,011\}$}/team, 
		{$\{101,111\}$}/team} {
		+({\axisH*cos(90-(\i-1)*360/\countElems)}, {\axisV*sin(90-(\i-1)*360/\countElems)}) node[\s] {\t}
	};
	
	\path[orbit] 
	(T2_3) ellipse({\axisH} and {\axisV}) 
	+(0,{-0.75-\axisV}) node {$(\ftwo^3)_T = \{\{000\},\{110\}\}$};
	\path (T2_3)
	\foreach \t/\s [count=\i] in {
		{$\{000,110\}$}/teamz, 
		{$\{100,010\}$}/team, 
		{$\{001,111\}$}/team, 
		{$\{101,011\}$}/team} {
		+({\axisH*cos(90-(\i-1)*360/\countElems)}, {\axisV*sin(90-(\i-1)*360/\countElems)}) node[\s] {\t}
	};
	
	\path[orbit] 
	(T2_4) ellipse({\axisH} and {\axisV}) 
	+(0,{-0.75-\axisV}) node {$(\ftwo^3)_T = \{\{000\},\{111\}\}$};
	\path (T2_4)
	\foreach \t/\s [count=\i] in {
		{$\{000,111\}$}/teamz, 
		{$\{100,001\}$}/team, 
		{$\{010,101\}$}/team, 
		{$\{110,001\}$}/team} {
		+({\axisH*cos(90-(\i-1)*360/\countElems)}, {\axisV*sin(90-(\i-1)*360/\countElems)}) node[\s] {\t}
	};
	
	\renewcommand*{\axisH}{4}
	\renewcommand*{\axisV}{2.5}
	\renewcommand*{\countElems}{8}
	
	\path[orbit] 
	(T3_1) ellipse({\axisH} and {\axisV}) 
	+(0,{-0.75-\axisV}) node {$(\ftwo^3)_T = \{\{000\}\}$};
	\path (T3_1)
	\foreach \t/\s [count=\i] in {
		{$\{000,100,010\}$}/teamz, 
		{$\{100,000,110\}$}/teamz, 
		{$\{010,110,000\}$}/teamz, 
		{$\{110,010,100\}$}/team,
		{$\{001,101,011\}$}/team,
		{$\{101,001,111\}$}/team,
		{$\{011,111,001\}$}/team,
		{$\{111,011,101\}$}/team} {
		+({\axisH*cos(90-(\i-1)*360/\countElems)}, {\axisV*sin(90-(\i-1)*360/\countElems)}) node[\s] {\t}
	};
	
	\renewcommand*{\axisH}{1}
	\renewcommand*{\axisV}{0.75}
	\renewcommand*{\countElems}{2}
	
	\path[orbit] 
	(T4_1) ellipse({\axisH} and {\axisV}) 
	+(0,{-0.75-\axisV}) node {$(\ftwo^3)_T = \{\{000, 100, 010, 110\}\}$};
	\path (T4_1)
	\foreach \t/\s [count=\i] in {
		{$\{000,100,010,110\}$}/teamz, 
		{$\{001,101,011,111\}$}/team} {
		+({\axisH*cos(90-(\i-1)*360/\countElems)}, {\axisV*sin(90-(\i-1)*360/\countElems)}) node[\s] {\t}
	};
	
\end{tikzpicture}
\caption{Orbits of $\setsatteams{k}$ with $\varphi \defeq \dep{x_1, x_3} \wedge \dep{x_2, x_3}$.}
\label{fig:robot_soccer_orbits}
\end{figure}
\end{example}
\fi
\subsubsection{Constructing $\setsatteamsz{k}$}

Now that we are able to construct all satisfying $k$-teams from a representative system, the next step is the construction  of $\setsatteamsz{k}$. For this purpose the concept of coherence will prove useful. 

\begin{definition}[\phantom{}{\cite[Definition 3.1]{K13}}]
	Let $\phi$ be a team-based propositional formula. Then $\phi$ is \emph{$k$-coherent} iff for all teams $T$ we have that
	$$T\models \phi \ \Leftrightarrow\ R\models\phi\ \forall R\subseteq T \text{ with } |R|=k.$$
\end{definition}

\begin{proposition}[\phantom{}{\cite[Prop.\ 3.3]{K13}}]
	\label{prop:dep_coherent}
	The atom $\dep{\cdot}$ is $2$-coherent.
\end{proposition}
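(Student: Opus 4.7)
The plan is to prove both implications of the biconditional directly from the semantic clause of $\dep{P,Q}$, namely $T \models \dep{P,Q}$ iff for all $s,t \in T$, $s\restrict{P} = t\restrict{P}$ implies $s\restrict{Q} = t\restrict{Q}$.

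For the forward direction ($\Rightarrow$), I would use downward closure: every $2$-element subset $R$ of $T$ inherits the dependence condition, since the quantification over pairs $(s,t) \in R \times R$ is a restriction of the quantification over pairs in $T \times T$. This is essentially an instance of the downward closure lemma stated earlier in the paper, specialised to subsets of cardinality $2$.

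For the backward direction ($\Leftarrow$), I would argue as follows. Suppose every $2$-element subset $R \subseteq T$ satisfies $\dep{P,Q}$, and let $s,t \in T$ with $s\restrict{P} = t\restrict{P}$. If $s = t$ the conclusion $s\restrict{Q} = t\restrict{Q}$ is immediate; otherwise $R := \{s,t\}$ is a genuine $2$-element subset of $T$, so $R \models \dep{P,Q}$ yields $s\restrict{Q} = t\restrict{Q}$. Either way $T \models \dep{P,Q}$. The only edge case is $|T| < 2$, in which case there are no pairs to check and the left-hand side holds vacuously; the right-hand side is also vacuously true because there are no $2$-element subsets. There is no real obstacle: the argument is a direct unwinding of the pair-based semantics of $\dep{\cdot}$, and the two directions together constitute a routine verification.
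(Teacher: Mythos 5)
Your proof is correct. The paper itself gives no proof of this proposition---it is imported verbatim from Kontinen \cite[Prop.~3.3]{K13}---and your argument is exactly the standard verification: the forward direction is downward closure of $\dep{\cdot}$, and the backward direction reduces the universally quantified pair condition on $T$ to the two-element subteam $\{s,t\}$ (with the diagonal case $s=t$ and the case $|T|<2$ handled trivially, noting only that for $|T|=1$ the condition holds trivially rather than vacuously, since the pair $(s,s)$ does occur).
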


\begin{proposition}[\phantom{}{\cite[Prop.\ 3.4]{K13}}]
	\label{prop:conjunction_coherent}
	If $\phi$, $\psi$ are $k$-coherent then $\phi \wedge \psi$ is $k$-coherent.
\end{proposition}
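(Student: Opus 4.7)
The plan is straightforward: unfold the definitions of the semantics of $\wedge$ and of $k$-coherence, and verify the two implications symmetrically.

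First I would fix an arbitrary team $T$ for $\phi \wedge \psi$ and recall that by the satisfaction clause for $\wedge$ we have $T\models\phi\wedge\psi$ iff $T\models\phi$ and $T\models\psi$. The goal is to show $T\models\phi\wedge\psi$ iff every $k$-subset $R\subseteq T$ satisfies $\phi\wedge\psi$.

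For the forward direction, assume $T\models\phi\wedge\psi$. Then $T\models\phi$ and $T\models\psi$, so by the $k$-coherence of $\phi$ and $\psi$ respectively, every $k$-subset $R\subseteq T$ satisfies both $\phi$ and $\psi$, and hence satisfies $\phi\wedge\psi$. For the reverse direction, assume that every $k$-subset $R\subseteq T$ satisfies $\phi\wedge\psi$, i.e., $R\models\phi$ and $R\models\psi$. Since this holds for \emph{every} such $R$, applying the $k$-coherence of $\phi$ to the family $\{R\subseteq T : |R|=k\}$ yields $T\models\phi$; applying the $k$-coherence of $\psi$ to the same family yields $T\models\psi$. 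Together these give $T\models\phi\wedge\psi$.

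There is no real obstacle here: the argument is purely a matter of ``distributing'' the quantifier over $k$-subsets past the conjunction, which works because the satisfaction of $\phi\wedge\psi$ by a team is the conjunction of satisfaction of $\phi$ and satisfaction of $\psi$ pointwise on teams. The only subtlety worth mentioning is the boundary case $|T|<k$, in which the set of $k$-subsets of $T$ is empty and both coherence conditions hold vacuously; but this case is handled uniformly on both sides of the equivalence and causes no trouble. The proof is two or three lines once the definitions are written down.
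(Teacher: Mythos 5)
Your argument is correct and complete: both directions follow by distributing the universal quantification over $k$-subsets across the conjunction, exactly as the definition of $k$-coherence and the semantics of $\wedge$ suggest, and your remark about the vacuous case $|T|<k$ is accurate. The paper itself gives no proof, citing the result from Kontinen's work, so there is nothing to compare against; your two-line unfolding of the definitions is the standard argument and is all that is needed.
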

Let $T=\{s_1,\dots,s_k\}$ be a team with $s_1<\dots<s_k$, $k\geq 2$. Then write 
\iflong
	\begin{align*}
	T_\red^1 \defeq \{s_1,\dots,s_{k-1}\},\ 
	T_\red^2 \defeq \{s_1,\dots,s_{k-2},s_k\},\  
	\max(T) \defeq s_k.
	\end{align*}
\else
$T_\red^1 \defeq \{s_1,\dots,s_{k-1}\}$, $T_\red^2 \defeq \{s_1,\dots,s_{k-2},s_k\}$, $\max(T) \defeq s_k.$
\fi
\noindent The following lemma provides a powerful tool for constructing the sets $\setsatteamsz{k}$.

\begin{lemma}
	\label{lemma:reduce_setsatteams}
	Let $T$ be as above and $k\defeq |T|\geq 3$. Then the following are equivalent:
	\begin{enumerate}[label={\upshape(\roman*)}]
		\item $T \in \setsatteamsz{k}$,
		\item $T_\red^1, T_\red^2 \in \setsatteamsz{k-1}$ and $\{\vec{0}, s_{k-1} + s_k\}\in \setsatteamsz{2}$.
	\end{enumerate}
\end{lemma}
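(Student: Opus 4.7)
The plan is to build the equivalence on two pillars: the \emph{$2$-coherence} of $\varphi$ and the \emph{orbit invariance} under flipping bits. Since after the simplification $(\star)$ the formula $\varphi$ is a conjunction of dependence atoms, Propositions~\ref{prop:dep_coherent} and~\ref{prop:conjunction_coherent} give us $2$-coherence of $\varphi$, so $T\models\varphi$ can be certified pair-by-pair. Downward closure then lets us deduce pair satisfaction from satisfaction of larger teams, and Theorem~\ref{theorem:orbit_invariance} lets us translate between equisatisfiable pairs differing by a common bit-flip.

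For the direction (i)$\Rightarrow$(ii), I would start from $T\models\varphi$ and $\vec 0\in T$. Because $k\geq 3$ and $\vec 0$ is the minimum assignment under the assumed order, $\vec 0=s_1\in T_\red^1\cap T_\red^2$. Downward closure of $\pdl$ immediately gives $T_\red^1,T_\red^2\models\varphi$, hence they lie in $\setsatteamsz{k-1}$. Applied to the $2$-subset $\{s_{k-1},s_k\}\subseteq T$, downward closure yields $\{s_{k-1},s_k\}\models\varphi$, and Theorem~\ref{theorem:orbit_invariance} with group element $s_{k-1}$ transports this to $s_{k-1}\{s_{k-1},s_k\}=\{\vec 0,\,s_{k-1}+s_k\}\models\varphi$, which is therefore in $\setsatteamsz{2}$.

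For (ii)$\Rightarrow$(i), the hypotheses give $\vec 0\in T_\red^1\subseteq T$ and $|T|=k$, so it remains to show $T\models\varphi$, which by $2$-coherence reduces to checking every $2$-subset $\{s_i,s_j\}$ with $i<j$. If $j\leq k-1$, then $\{s_i,s_j\}\subseteq T_\red^1$, and downward closure finishes it. If $i\leq k-2$ and $j=k$, then $\{s_i,s_j\}\subseteq T_\red^2$, and again downward closure finishes. The only case not covered by $T_\red^1$ or $T_\red^2$ is the pair $\{s_{k-1},s_k\}$. Here I invoke the third hypothesis $\{\vec 0,\,s_{k-1}+s_k\}\models\varphi$ and apply Theorem~\ref{theorem:orbit_invariance} with group element $s_{k-1}$, obtaining
\[
s_{k-1}\{\vec 0,\,s_{k-1}+s_k\} \;=\; \{s_{k-1},\,s_{k-1}+s_{k-1}+s_k\} \;=\; \{s_{k-1},s_k\} \models \varphi.
\]
Thus all $2$-subsets of $T$ satisfy $\varphi$, so $T\models\varphi$ by $2$-coherence, which together with $\vec 0\in T$ places $T$ in $\setsatteamsz{k}$.

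The only nontrivial step is the treatment of $\{s_{k-1},s_k\}$: this is the one pair that survives both reductions $T_\red^1$ and $T_\red^2$, and storing its satisfaction information within $\setsatteamsz{k-1}$ directly is impossible. Relocating it to a $2$-team containing $\vec 0$ via the flipping-bits action is precisely what makes the recurrence self-contained inside the family $\{\setsatteamsz{k}\}_{k\in\natural}$, which is exactly what the subsequent algorithm will need for its inductive construction.
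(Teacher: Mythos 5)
Your proof is correct and follows essentially the same route as the paper's: both reduce everything to $2$-coherence of the simplified formula and use the flipping-bits group action to relocate the one uncovered pair $\{s_{k-1},s_k\}$ to the $2$-team $\{\vec 0, s_{k-1}+s_k\}$. The only cosmetic differences are that you invoke downward closure where the paper reuses $2$-coherence for the subteams $T_\red^1,T_\red^2$, and you argue the reverse direction directly rather than by contradiction.
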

\iflong
\begin{proof}
	After simplifying $\varphi$ we may assume that $\varphi$ is a conjunction of dependence atoms. In particular $\varphi$ is $2$-coherent by Proposition \ref{prop:dep_coherent} and \ref{prop:conjunction_coherent}.
	
	(i) $\Rightarrow$ (ii): Let $T\in \setsatteams{k}$. Any subset of cardinality 2 contained in $T_\red^1$ or $T_\red^2$ is a subset of $T$. The $2$-coherence of $\varphi$ yields $T_\red^i\models \varphi$ for $i\in\{1,2\}$. Furthermore $\vec{0}=s_1\in T_\red^i$ and $|T_\red^i| = k-1$ holds. This gives us $T_\red^1, T_\red^2\in \setsatteamsz{k-1}$. Again by the $2$-coherence  of $\varphi$ we obtain that $\{s_{k-1}, s_{k}\} \in \setsatteams{2}$. Applying the group action $\ftwo^n \acts \setsatteams{2}$ shows that $\{\vec{0}, s_{k-1} + s_k\}\in\setsatteamsz{2}$.
	
	(ii) $\Rightarrow$ (i): First note that $\vec{0}\in T_\red^1\subset T$ and $|T| = |T_\red^1| + 1 = k$. Assume $T\not\models \varphi$. Then by $2$-coherence there exists a subset $R\subseteq T$ with $|R|=2$ and $R\not\models\varphi$. In particular, we have $R\not\subseteq T_\red^1, T_\red^2$, implying $R=\{s_{k-1}, s_k\}$. This contradicts $s_{k-1}R = \{\vec{0}, s_{k-1} + s_k\}\in\setsatteamsz{2}$.
\end{proof}
\fi
\begin{algorithm}
	\caption{Constructing $\setsatteamsz{k}$}
	\label{algo:constructing_setsatteamsz}
	
	\KwIn{$k\in\natural$, $k\geq 2$}
	\KwDependencies{If $k>2$: $\mathcal{D}_2[\{\vec{0}\}]$, $\mathcal{D}_{k-1}$ of the previous iteration}
	\KwResult{$\setsatteamsz{k}$}
	\BlankLine
	
	$\setsatteamsz{k} \leftarrow \emptyset$\iflong\else, $\mathcal{D}_k \leftarrow \New\ \Map(\text{Team}, \List(\text{Assignment}))$\fi\;
	\iflong $\mathcal{D}_k \leftarrow \New\ \Map(\text{Team}, \List(\text{Assignment}))$\;\fi
	\eIf{$k=2$}{
		$\mathcal{D}_2[\{\vec{0}\}] \leftarrow \emptyset$\;
		\For(\Comment*[f]{$\leq 2^n$ iterations}){$\vec{0}\neq s\in \setassign{\var(\varphi)}$}{
			\If(\Comment*[f]{$\bigo(|\varphi|)$}){$\{\vec{0}, s\}\models\varphi$}{
				$\mathcal{D}_2[\{\vec{0}\}] \leftarrow \mathcal{D}_2[\{\vec{0}\}] \cup \{s\}$\Comment*[r]{$\bigo(n)$}
				\label{algline:constructing_setsatteamsz_d2_insertion}
				$\setsatteamsz{2} \leftarrow \setsatteamsz{2} \cup \{\vec{0}, s\}$\Comment*[r]{$\bigo(n)$}
			}
		}
	}{
	\For{$(T,L)\in \mathcal{D}_{k-1}$}{
		\For(\Comment*[f]{$\cardsatteamsz{k-1}$ iterations}){$r\in L$}{ \label{algline:constructing_setsatteamsz_loop2}
			$T' \leftarrow T \cup \{r\}$, $\mathcal{D}_k[T'] \leftarrow \emptyset$\;
			\For(\Comment*[f]{$\leq 2^n$ iterations}){$s\in L$ with $s>r$}{ \label{algline:constructing_setsatteamsz_loop3}
				\If(\Comment*[f]{$\bigo(n)$}){$r+s\in\mathcal{D}_2[\{\vec{0}\}]$}{ \label{algline:constructing_setsatteamsz_if}
					$\mathcal{D}_k[T'] \leftarrow \mathcal{D}_k[T'] \cup \{s\}$\Comment*[r]{$\bigo(kn)$}
					$\setsatteamsz{k} \leftarrow \setsatteamsz{k} \cup \{T' \cup \{s\}\}$\Comment*[r]{$\bigo(kn)$}
				}
			}
		}
	}
}
\end{algorithm}

Algorithm \ref{algo:constructing_setsatteamsz} computes the sets $\setsatteamsz{k}$ by exploiting the previous lemma. In order to ensure fast list operations, we manage teams in tries \cite[chapter 6.3]{K98}. Since any team of cardinality $k$ may be described by $kn$ bits, the standard list operations as searching, insertion and deletion are realised in $\bigo(kn)$. We organise satisfying teams such that all teams of cardinality $k$ which only differ in their maximal assignment are described by a list $\mathcal{D}_k[T']$, where $T'$ is the team containing the common $k-1$ smaller assignments. It suffices to store the maximal assignment of each team $T$ described in $\mathcal{D}_k[T']$ since $T$ may be recovered by $T'$ and $\max(T)$. Hence $\mathcal{D}_k$ becomes a collection of lists indexed by teams of cardinality $k-1$. The following lemma states the correct construction of $\mathcal{D}_k$ in Algorithm \ref{algo:constructing_setsatteamsz}.

\begin{lemma}
	\label{lemma:algorithm_setsatteamsz}
	Let $k\geq 2$. For $T\in\setsatteamsz{k}$ we have that $\max(T)\in\mathcal{D}_k[T_\red^1]$. Vice versa, if $s\in\mathcal{D}_k[T]$, then it follows that $T\cup \{s\}\in\setsatteamsz{k}$ and $s>\max(T)$.
\end{lemma}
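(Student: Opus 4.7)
The plan is a straightforward induction on $k$ that matches Algorithm~\ref{algo:constructing_setsatteamsz} step by step against the recursive characterisation of $\setsatteamsz{k}$ supplied by Lemma~\ref{lemma:reduce_setsatteams}. The base case $k = 2$ is handled directly by inspecting the algorithm: $\mathcal{D}_2[\{\vec{0}\}]$ receives exactly those nonzero assignments $s$ with $\{\vec{0}, s\}\models\varphi$, and since $s\neq\vec{0}$ forces $s > \max(\{\vec{0}\}) = \vec{0}$, both implications follow.

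For the inductive step $k\geq 3$, the forward direction starts with $T = \{s_1 < \cdots < s_k\}\in\setsatteamsz{k}$ and sets $U \defeq \{s_1,\dots,s_{k-2}\}$. Lemma~\ref{lemma:reduce_setsatteams} gives $T_\red^1 = U\cup\{s_{k-1}\}$ and $T_\red^2 = U\cup\{s_k\}$ in $\setsatteamsz{k-1}$, together with $\{\vec{0}, s_{k-1}+s_k\}\in\setsatteamsz{2}$. The induction hypothesis applied to $T_\red^1$ and $T_\red^2$ places both $s_{k-1}$ and $s_k$ into $L \defeq \mathcal{D}_{k-1}[U]$, while the base case places $s_{k-1}+s_k$ into $\mathcal{D}_2[\{\vec{0}\}]$. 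When the outer loop of the algorithm reaches $(U, L)$, picks $r = s_{k-1}$ (so that $T' = T_\red^1$) and then iterates to $s = s_k$ in the inner loop at line~\ref{algline:constructing_setsatteamsz_loop3}, the test at line~\ref{algline:constructing_setsatteamsz_if} succeeds and $s_k = \max(T)$ is added to $\mathcal{D}_k[T_\red^1]$, as required.

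For the converse, suppose $s\in\mathcal{D}_k[T]$. Tracing the algorithm, $s$ was inserted only after the selection of some $(U, L)\in\mathcal{D}_{k-1}$, some $r\in L$ with $T = U\cup\{r\}$, and some $s\in L$ with $s > r$ and $r+s\in\mathcal{D}_2[\{\vec{0}\}]$. The induction hypothesis supplies $T = U\cup\{r\}\in\setsatteamsz{k-1}$ with $\max(T) = r$ (hence $s > \max(T)$) together with $U\cup\{s\}\in\setsatteamsz{k-1}$, while the base case gives $\{\vec{0}, r+s\}\in\setsatteamsz{2}$. Since $(T\cup\{s\})_\red^1 = U\cup\{r\} = T$ and $(T\cup\{s\})_\red^2 = U\cup\{s\}$, the reverse implication of Lemma~\ref{lemma:reduce_setsatteams} yields $T\cup\{s\}\in\setsatteamsz{k}$, closing the induction.

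The only delicate point I expect is the bookkeeping around the two reduced teams $T_\red^1$ and $T_\red^2$ sharing the common prefix $U$: one must verify that the algorithm really does deposit both $s_{k-1}$ and $s_k$ into the same list $L$. This is automatic once the induction hypothesis is invoked for each of $T_\red^1$ and $T_\red^2$, since $(T_\red^1)_\red^1 = (T_\red^2)_\red^1 = U$.
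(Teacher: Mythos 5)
Your proof is correct and follows essentially the same route as the paper: induction on $k$, with the base case read off directly from the algorithm and the inductive step matching the loop structure against both directions of Lemma~\ref{lemma:reduce_setsatteams}. The one point you flag as delicate---that the induction hypothesis applied to both $T_\red^1$ and $T_\red^2$ deposits $s_{k-1}$ and $s_k$ into the same list $\mathcal{D}_{k-1}[U]$---is handled correctly and is in fact spelled out more explicitly than in the paper's own write-up.
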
\iflong
\begin{proof} We conduct an induction over $k$.
	
	Induction basis ($k=2$): Let $T\in\setsatteamsz{2}$. It follows that $T_\red^1 = \{\vec{0}\}$. As we have $\{\vec{0}, \max(T)\}\models\varphi$, in line \ref{algline:constructing_setsatteamsz_d2_insertion} $\max(T)$ is inserted into $\mathcal{D}_2[T_\red^1]$. Now let $s\in\setassign{\var(\varphi)}$ and $T\in\setteam{\var(\varphi)}$ such that $s\in\mathcal{D}_2[T]$. The only team occurring in $\mathcal{D}_2$ is $T=\{\vec{0}\}$. We have $s\in\mathcal{D}_2[T]$ iff $T \cup \{s\} = \{\vec{0}, s\}\models\varphi$ and $s\neq \vec{0}$. The claim follows.
	
	Induction step ($k-1\rightarrow k$): Let $T=\{s_1,\dots,s_k\}\in\setsatteamsz{k}$, $s_1<\dots<s_k$. By induction hypothesis and Lemma~\ref{lemma:reduce_setsatteams} we have that $s_{k-1}\in\mathcal{D}_{k-1}[T_\red^1 \setminus \{s_{k-1}\}]$. Accordingly, the loop body of line \ref{algline:constructing_setsatteamsz_loop2} is invoked with $T \corresponds T_\red^1 \setminus \{s_{k-1}\}$, $r \corresponds s_{k-1}$, $T' \corresponds T_\red^1$. Furthermore by Lemma \ref{lemma:reduce_setsatteams} the loop body of line \ref{algline:constructing_setsatteamsz_loop3} is invoked with $s\corresponds s_k$, passing the check in line \ref{algline:constructing_setsatteamsz_if}. As a result, $s_k = \max(T)$ is inserted into $\mathcal{D}_k[T_\red^1]$. 
	
	Now let $s\in\setassign{\var(\varphi)}$ and $T\in\setteam{\var(\varphi)}$ such that $s\in\mathcal{D}_k[T]$. Then by the construction of $\mathcal{D}_k$ there exist a team $T'$ and $r\in\mathcal{D}_{k-1}[T']$ with $r<s$, $s\in\mathcal{D}_{k-1}[T']$, $T'\cup\{r\} = T$ and $\{\vec{0}, r+s\}\in\setsatteamsz{2}$. The induction hypothesis yields $T' \cup \{r\}, T'\cup \{s\}\in\setsatteamsz{k-1}$ and $r > \max(T')$, implying $s > r = \max(T)$. As $s$ and $r$ are the largest elements of $T\cup\{s\}$, it follows that $(T\cup\{s\})_\red^1 = T' \cup \{r\}$ and $(T\cup\{s\})_\red^2 = T' \cup \{s\}$. By Lemma \ref{lemma:reduce_setsatteams} we obtain $T\cup\{s\}\in\setsatteamsz{k}$.
\end{proof}\fi

\iflong
\begin{corollary}
	Algorithm \ref{algo:constructing_setsatteamsz} constructs the sets $\setsatteamsz{k}$ correctly.
\end{corollary}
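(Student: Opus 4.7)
The plan is to reduce the corollary almost immediately to Lemma \ref{lemma:algorithm_setsatteamsz}, since that lemma already fully characterises which pairs $(T, s)$ end up in the map $\mathcal{D}_k$. The key observation I would begin with is that Algorithm \ref{algo:constructing_setsatteamsz} couples its two data structures in lockstep: every conditional branch that inserts an element $s$ into $\mathcal{D}_k[T']$ (the single \textbf{if} in the $k=2$ branch, and the conditional in the inner loop of the $k>2$ branch) immediately also inserts the team $T'\cup\{s\}$ into the accumulator $\setsatteamsz{k}$. Hence, at any point in the run, the multiset of teams in $\setsatteamsz{k}$ is exactly $\{\, T'\cup\{s\} : s\in\mathcal{D}_k[T']\,\}$.

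Given this lockstep invariant, I would prove the two inclusions separately. For $\setsatteamsz{k}\subseteq\text{output}$, fix $T\in\setsatteamsz{k}$; by Lemma \ref{lemma:algorithm_setsatteamsz} we have $\max(T)\in\mathcal{D}_k[T_\red^1]$ at termination, so that insertion occurred, and the lockstep observation shows $T_\red^1\cup\{\max(T)\}=T$ was simultaneously appended to $\setsatteamsz{k}$. For the reverse inclusion, every team written to $\setsatteamsz{k}$ has the form $T'\cup\{s\}$ with $s\in\mathcal{D}_k[T']$, and the second half of Lemma \ref{lemma:algorithm_setsatteamsz} guarantees that every such team lies in $\setsatteamsz{k}$.

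The only subtlety to address is that Algorithm \ref{algo:constructing_setsatteamsz} is stated as a single stage of a staged computation: it consumes $\mathcal{D}_2[\{\vec{0}\}]$ and $\mathcal{D}_{k-1}$ from earlier iterations. I would therefore phrase the corollary as an induction on $k\ge 2$, using Lemma \ref{lemma:algorithm_setsatteamsz} at stage $k$ together with the induction hypothesis that $\mathcal{D}_{k-1}$ was produced correctly at stage $k-1$ (the base case $k=2$ needs no prior stage). I do not foresee any genuine obstacle: the proof is pure bookkeeping, all the mathematical content sits in Lemma \ref{lemma:reduce_setsatteams} and Lemma \ref{lemma:algorithm_setsatteamsz}, which have already been discharged above.
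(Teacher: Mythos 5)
Your proposal is correct and follows essentially the same route as the paper: both directions are read off from Lemma \ref{lemma:algorithm_setsatteamsz} together with the observation that every insertion into $\mathcal{D}_k[T']$ is paired with the insertion of $T'\cup\{s\}$ into $\setsatteamsz{k}$ (the paper additionally notes that the decomposition $T = T'\cup\{s\}$ with $s>\max(T')$ is unique, so no team is inserted twice). Your extra induction on $k$ is harmless but redundant, since the staging across iterations is already absorbed into the induction in the proof of Lemma \ref{lemma:algorithm_setsatteamsz} itself.
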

\begin{proof}
	Every team $T$ inserted into $\setsatteamsz{k}$ by Algorithm \ref{algo:constructing_setsatteamsz} has the form $T'\cup\{s\}$ with $s\in\mathcal{D}_k[T']$. Then by Lemma \ref{lemma:algorithm_setsatteamsz} it follows that $s>\max(T)$ and $T\in\setsatteamsz{k}$. Note that the decomposition of $T$ is unique because of $s>\max(T)$. On that account $T$ is inserted only once.
	
	Now let $T\in\setsatteamsz{k}$. Lemma \ref{lemma:algorithm_setsatteamsz} states that $\max(T) \in\mathcal{D}_k[T_\red^1]$. After inserting $\max(T)$ into $\mathcal{D}_k[T_\red^1]$, Algorithm \ref{algo:constructing_setsatteamsz} inserts $T_\red^1 \cup \max(T) = T$ into $\setsatteamsz{k}$.
\end{proof}

\begin{corollary}
	\label{coro:algorithm_setsatteamsz_complexity}
	Algorithm \ref{algo:constructing_setsatteamsz} requires $\cardsatteamsz{k-1}\cdot 2^n \cdot \bigo(k|\varphi|)$ time on input $k\in\natural$.
\end{corollary}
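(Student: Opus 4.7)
The plan is to bound the running time by cases, matching the two branches of the algorithm.

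First, I would handle the base case $k=2$. The outer loop runs at most $2^n$ times, and each iteration consists of a model check $\{\vec{0}, s\} \models \varphi$ of cost $\bigo(|\varphi|)$, together with two trie insertions of cost $\bigo(n) \subseteq \bigo(|\varphi|)$. This gives total time $2^n \cdot \bigo(|\varphi|)$. Since after the reduction $(\star)$ the formula $\varphi$ is a conjunction of dependence atoms, every singleton team satisfies $\varphi$, so $\cardsatteamsz{1} = 1$ (only $\{\vec{0}\}$). Hence the bound can be rewritten as $\cardsatteamsz{1} \cdot 2^n \cdot \bigo(2|\varphi|) = \cardsatteamsz{k-1} \cdot 2^n \cdot \bigo(k|\varphi|)$, which matches the claim.

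For the recursive case $k > 2$, the key observation is that the combined number of iterations of the two outermost loops equals $\sum_{(T,L) \in \mathcal{D}_{k-1}} |L|$, i.e. the total number of pairs $(T,r)$ with $r \in \mathcal{D}_{k-1}[T]$. By Lemma~\ref{lemma:algorithm_setsatteamsz} these pairs are in bijection with $\setsatteamsz{k-1}$ via $(T,r) \mapsto T \cup \{r\}$, so the combined count is exactly $\cardsatteamsz{k-1}$. The innermost loop over $s \in L$ with $s > r$ contributes at most $|L| - 1 \le 2^n$ further iterations, each performing the membership test $r+s \in \mathcal{D}_2[\{\vec{0}\}]$ in $\bigo(n)$ and (when successful) the two trie insertions of cost $\bigo(kn)$. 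Using $n \le |\varphi|$, the per-iteration cost is $\bigo(k|\varphi|)$, and multiplying through yields total running time $\cardsatteamsz{k-1} \cdot 2^n \cdot \bigo(k|\varphi|)$.

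The only non-routine step is the identification of the combined iteration count of the two outer loops with $\cardsatteamsz{k-1}$; everything else is a direct tally of the costs annotated in Algorithm~\ref{algo:constructing_setsatteamsz}. That identification, however, is already furnished by Lemma~\ref{lemma:algorithm_setsatteamsz}, which guarantees that each team of $\setsatteamsz{k-1}$ is represented exactly once in $\mathcal{D}_{k-1}$ through the pair $(T_\red^1, \max(T))$. I therefore expect no substantial obstacle; the proof is essentially a short bookkeeping argument combining the per-line costs with this counting lemma.
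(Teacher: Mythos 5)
Your proof is correct and follows essentially the same route as the paper: both arguments hinge on Lemma~\ref{lemma:algorithm_setsatteamsz} to show that the loop body at line~\ref{algline:constructing_setsatteamsz_loop2} is entered exactly $\cardsatteamsz{k-1}$ times (via the unique decomposition of each $T\in\setsatteamsz{k-1}$ into $T_\red^1$ and $\max(T)$), and then multiply by the annotated per-iteration costs using $n\leq|\varphi|$. The only cosmetic difference is that the paper additionally justifies the $\bigo(|\varphi|)$ cost of the check $\{\vec{0},s\}\models\varphi$ by translating it into a classical propositional formula, whereas you take that annotation as given; your explicit treatment of the $k=2$ base case via $\cardsatteamsz{1}=1$ is a harmless refinement.
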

\begin{proof}
	Note
	$$\{\vec{0}, s\} \models \dep{P, Q}\ \Leftrightarrow\ s\models \left(\bigvee_{x\in P}x\right) \vee \left(\bigwedge_{y\in Q} \neg y\right)\quad\forall \vec{0}\neq s \in\setassign{\var(\varphi)}.$$
	As a consequence checking $\{\vec{0}, s\}\models\varphi$ can be accomplished in linear time by evaluating a $\mathcal{P\!L}$-formula of length $\bigo(|\varphi|)$ (where $\lor$ has the classical propositional disjunction semantics). Accessing the list $\mathcal{D}_k[T]$ for a team $T$ is in $\bigo(kn)$ if $\mathcal{D}_k$ is implemented as a trie. If that also is the case for the inner list $\mathcal{D}_k[T]$, its operations are realised in $\bigo(\log 2^n) = \bigo(n)$, which is contained in $\bigo(kn)$.
	
	As the decomposition for $T\in\setsatteamsz{k-1}$ into $T'$ and $s$ with $s>\max(T')$ is unique, applying Lemma \ref{lemma:algorithm_setsatteamsz} yields that the loop body of line \ref{algline:constructing_setsatteamsz_loop2} is invoked $\cardsatteamsz{k-1}$ times.
	
	Taking into account that $n\leq |\varphi|$, we obtain the claim by adding up all costs.
\end{proof}
\else
\begin{corollary}
	\label{coro:algorithm_setsatteamsz_complexity}
	Algorithm \ref{algo:constructing_setsatteamsz} correctly constructs the sets $\setsatteamsz{k}$ and it requires time $\cardsatteamsz{k-1}\cdot 2^n \cdot \bigo(k|\varphi|)$ on input $k\in\natural$.
\end{corollary}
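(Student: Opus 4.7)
The proof splits into correctness and running time. For correctness, I would rely almost entirely on Lemma~\ref{lemma:algorithm_setsatteamsz}, which characterises the maps $\mathcal{D}_k$ constructed by the algorithm: $s \in \mathcal{D}_k[T]$ iff $T \cup \{s\} \in \setsatteamsz{k}$ and $s > \max(T)$. Granted the lemma, every team $T \in \setsatteamsz{k}$ satisfies $\max(T) \in \mathcal{D}_k[T_\red^1]$, so $T = T_\red^1 \cup \{\max(T)\}$ is inserted into $\setsatteamsz{k}$; conversely, each insertion by the algorithm has the form $T' \cup \{s\}$ with $s \in \mathcal{D}_k[T']$, which the lemma certifies to be in $\setsatteamsz{k}$. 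The decomposition $T = T_\red^1 \cup \{\max(T)\}$ is unique because $\max(T) > r$ for every other $r \in T$, so the algorithm never inserts the same team twice. The lemma itself is proved by induction on $k$, with the base case $k=2$ handled directly from the satisfaction check in line~\ref{algline:constructing_setsatteamsz_d2_insertion}, and the inductive step reduced to the biconditional supplied by Lemma~\ref{lemma:reduce_setsatteams}.

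For the runtime, the case $k = 2$ is independent: the outer loop runs over all $\vec{0} \neq s \in \setassign{\var(\varphi)}$, i.e.\ $2^n$ iterations. The check $\{\vec{0},s\} \models \varphi$ costs $\bigo(|\varphi|)$, since for a dependence atom $\{\vec{0},s\} \models \dep{P,Q}$ reduces to testing whether $s$ assigns $1$ to some variable in $P$ or $0$ to every variable in $Q$; the conjunction of such tests is linear in the size of $\varphi$. The two list insertions each run in $\bigo(n) \subseteq \bigo(|\varphi|)$, and since $\cardsatteamsz{1} = 1$ the total matches the claimed bound.

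For $k > 2$, the combined effect of the two outer loops is to iterate over all pairs $(T',r)$ with $r \in \mathcal{D}_{k-1}[T']$. By Lemma~\ref{lemma:algorithm_setsatteamsz}, such pairs are in bijection with $\setsatteamsz{k-1}$, so the outer loop body is entered exactly $\cardsatteamsz{k-1}$ times. The innermost loop over $s \in L$ with $s > r$ is bounded by $|L| \leq 2^n$. Using the trie representation, the membership test $r+s \in \mathcal{D}_2[\{\vec{0}\}]$ costs $\bigo(n)$, and each of the two updates costs $\bigo(kn) \subseteq \bigo(k|\varphi|)$. Multiplying these factors gives the promised $\cardsatteamsz{k-1} \cdot 2^n \cdot \bigo(k|\varphi|)$ bound.

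The only delicate point, and the main obstacle, is being careful about the data-structure accounting: one must justify that the enumeration of $(T,L) \in \mathcal{D}_{k-1}$ together with the lookup and insertion of $n$-bit assignments in the nested tries all run in $\bigo(kn)$ per operation, without introducing hidden overhead that would blow up the product. Once the trie cost model is fixed (as in \cite{K98}) and Lemma~\ref{lemma:algorithm_setsatteamsz} is available to count iterations exactly, the remaining arithmetic is routine.
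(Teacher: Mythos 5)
Your proposal is correct and follows essentially the same route as the paper: correctness via the characterisation of $\mathcal{D}_k$ in Lemma~\ref{lemma:algorithm_setsatteamsz} together with the uniqueness of the decomposition $T = T_\red^1 \cup \{\max(T)\}$, and the time bound by reducing the check $\{\vec{0},s\}\models\dep{P,Q}$ to a classical propositional evaluation in $\bigo(|\varphi|)$, counting the outer loop body as being entered exactly $\cardsatteamsz{k-1}$ times, and charging $\bigo(kn)$ per trie operation. No gaps.
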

\fi
\iflong
\begin{example}
	\label{ex:robot_soccer_setsatteamsz}
	We construct the sets $\setsatteamsz{k}$ for the reduced formula
	$\varphi \defeq \dep{x_1, x_3} \wedge \dep{x_2, x_3}$
	from page~\pageref{ex:robot_soccer_reduced}. Trivially, we have that $\setsatteamsz{1} = \{\{000\}\}$. When computing $\setsatteamsz{2}$, we have to identify all nonzero assignments that satisfy
	$(x_1 \vee \neg x_3) \wedge (x_2 \vee \neg x_3).$
	Obviously, the satisfying assignments are $100$, $010$, $110$ and $111$. We obtain
	$\mathcal{D}_2[\{000\}] = \{100,010,110,111\}$.
	Figure \ref{fig:robot_soccer_setsatteamsz} illustrates the construction of the remaining lists and the resulting sets $\setsatteamsz{k}$. We are able to verify that the orbits presented in Example~\ref{ex:robot_soccer_orbits} are exactly those of $\setsatteams{k}$, $k\in\natural$. Each orbit contains at least one element of $\setsatteamsz{k}$ and every team in $\setsatteamsz{k}$ can be recovered in one orbit of Figure \ref{fig:robot_soccer_orbits}.

\end{example}

\begin{figure}
\centering
\begin{tikzpicture}[   
  thick,
  assignment/.style={rectangle, fill=blue!20, align=center, draw},
  teamz/.style={rounded corners, fill=red!40, align=center, draw},
  orbit/.style={very thick, draw},
  frame/.style={rounded corners, fill=blue!10, thick, draw}
]

	\newcommand*{\Width}{9}
	
	\draw[frame] 	(-0.75,0.5) 
					node[right] at (-3.75,0) {$\mathcal{D} \defeq \mathcal{D}_2[\{000\}]$} 
					rectangle 
					({\Width+0.75},-0.5);
					
	\draw[frame] 	(-0.75,-2.5) 
					node[right] at (-3.75,-3) {$\mathcal{D}_3[\{000,100\}]$}
					rectangle 
					({\Width*2/5+0.75},-3.5);
					
	\draw[frame] 	({\Width*3/5-0.75},-2.5) 
					rectangle 
					({\Width*4/5+0.75},-3.5) 
					node[below] at +(-\Width/10-0.75,0) {$\mathcal{D}_3[\{000,010\}]$};
					
	\draw[frame] 	({\Width-0.75},-2.5) 
					rectangle 
					({\Width+0.75},-3.5) 
					node[below, align=right] at +(-0.75,0) {$\mathcal{D}_3[\{000,$\\$110\}]$};
					
	\draw[frame] 	({\Width/11-0.75},-5.5) 
					node[right] at (-3.75,-6) {$\mathcal{D}_4[\{000,100,010\}]$}
					rectangle 
					({\Width/11+0.75},-6.5);

	\foreach \t [count=\i] in {
		{$100$},
		{$010$},
		{$110$},
		{$111$}} {
		\node[assignment] (T2-\i) at ({\Width/3*(\i-1)},0) {\t};
	}
	
	\foreach \t/\p in {
		$010$/1,
		$110$/2,
		$110$/4} {
		\node[assignment] (T3-\p) at ({\Width/5*(\p-1)},-3) {\t};
	}
	
	\foreach \p in {
		3,
		5,
		6} {
		\coordinate (T3-\p) at ({\Width/5*(\p-1)},-3);
	}
	
	\node[assignment] (T4-1) at ({\Width/11},-6) {$110$};
	
	\foreach \r/\s/\e/\h/\j/\t in {
		1/2/1/2/3/$110\in\mathcal{D}$,
		1/3/2/2/3/$010\in\mathcal{D}$,
		2/3/4/2/3/$100\in\mathcal{D}$,
		1/2/1/3/4/$100\in\mathcal{D}$} {
		\draw[fill=black] (T\j-\e) + (0,1.5) circle(0.1) -- (T\h-\r);
		\draw (T\j-\e) -- +(0,1.5) node[below] {\t} -- (T\h-\s);
	}
	
	\foreach \r/\s/\e/\h/\j/\t in {
		1/4/3/2/3/$011\notin\mathcal{D}$,
		2/4/5/2/3/$101\notin\mathcal{D}$,
		3/4/6/2/3/$001\notin\mathcal{D}$} {
		\draw[fill=red,color=red] (T\j-\e) + (0,1.5) circle(0.1) -- (T\h-\r);
		\draw[color=red] (T\j-\e) -- +(0,1.5) node[below] {\t} -- (T\h-\s);
		\draw[ultra thick, color=red] (T\j-\e) +(-0.25,-0.25) -- +(0.25,0.25);
		\draw[ultra thick, color=red] (T\j-\e) +(0.25,-0.25) -- +(-0.25,0.25);
	}
	
	\node[right] (D1) 	at (-3.75,-7.5) {$\mathcal{D}_2[\{000\}]$};
	\node[right] (D2) at (-3.75,-8.5) {$\mathcal{D}_3[\{000,100\}]$};
	\node[right] (D3) at (-3.75,-9.5) {$\mathcal{D}_3[\{000,010\}]$};
	\node[right] (D4)	at (-3.75,-10.5) {$\mathcal{D}_4[\{000,100,010\}]$};
	
	\foreach \i in {1,2,3,4} {
		\draw[->] (D\i) -- (0.3,{-6.5-\i});
	}
	
	\foreach \t [count=\i] in {
	{$\{000,100\}$},
	{$\{000,010\}$},
	{$\{000,110\}$},
	{$\{000,111\}$}} {
		\node[teamz, right] at ({0.5 + 2*(\i-1)}, -7.5) {\t};
	}
	
	\foreach \t [count=\i] in {
		{$\{000,100,010\}$},
		{$\{000,100,110\}$}} {
		\node[teamz, right] at ({0.5 + 2.7*(\i-1)}, -8.5) {\t};
	}

	\node[teamz, right] at (0.5, -9.5) {$\{000,010,110\}$};
	\node[teamz, right] at (0.5, -10.5) {$\{000,100,010,110\}$};
	
	\draw[thick,decorate,decoration={brace}] (9,-7.2) -- (9,-7.8) node[midway, right] {$\setsatteamsz{2}$};
	\draw[thick,decorate,decoration={brace}] (9,-8.2) -- (9,-9.8) node[midway, right] {$\setsatteamsz{3}$};
	\draw[thick,decorate,decoration={brace}] (9,-10.2) -- (9,-10.8) node[midway, right] {$\setsatteamsz{4}$};
	
\end{tikzpicture}
\begin{tikzpicture}
	\node at (0,0) {$\setsatteamsz{k}=\emptyset\quad \forall k>4$};
\end{tikzpicture}
\caption{Construction of $\setsatteamsz{k}$ with $\varphi \defeq \dep{x_1, x_3} \wedge \dep{x_2, x_3}$.}
\label{fig:robot_soccer_setsatteamsz}
\end{figure}
\fi
\iflong
\subsubsection{The algorithm}
\label{sec:pl_pdlnovee_algorithm}
\fi

Although by Corollary \ref{coro:algorithm_setsatteamsz_complexity} Algorithm \ref{algo:constructing_setsatteamsz} does not perform in polynomial time on input $k\in\natural$, we can ensure polynomial delay when distributing its execution over the process of outputting all satisfying teams of cardinality $k-1$. For this reason we investigate the costs of computing $\setsatteamsz{k}$ divided by $\cardsatteams{k-1}$. With Corollary \ref{coro:algorithm_setsatteamsz_complexity} and 
\iflong$$k-1 = \frac{\cardsatteamsz{k-1} \cdot 2^n}{\cardsatteams{k-1}},$$
\else$k-1 = \frac{\cardsatteamsz{k-1} \cdot 2^n}{\cardsatteams{k-1}},$\fi
which is a transformation of the equation in Theorem \ref{theorem:teams_per_zero_team}, we obtain
\begin{align*}
	\frac{\text{computationCosts}(\setsatteamsz{k})}{\cardsatteams{k-1}}
	= \frac{\cardsatteamsz{k-1}\cdot 2^n \cdot \bigo(k|\varphi|)}{\cardsatteams{k-1}} 
	= (k-1)\cdot\bigo(k|\varphi|) 
	= \bigo(k^2|\varphi|).
\end{align*}

Since the delay of generating the orbits $\ftwo^n T$ is $\bigo(k^3n)$ by Theorem \ref{theorem:generating_orbits}, the overall delay of Algorithm \ref{algo:enumerating_satteams} is bounded by $\bigo(k^3|\varphi|)$. Note that the cost of removing elements in $\setsatteamsz{k}$, which is $\bigo(kn)$, is contained in $\bigo(k^3|\varphi|)$. \iflong\else Proposition \ref{prop:orbit_partition} and Lemma \ref{lemma:representative_system_in_tzero} witness a correct enumeration of Algorithm \ref{algo:constructing_setsatteamsz} without duplicates. \fi
In practise, we interleave both computation strands by executing $k$ iterations of the loop at line~\ref{algline:constructing_setsatteamsz_loop3} in Algorithm~\ref{algo:constructing_setsatteamsz} whenever a team is outputted.
\begin{algorithm}[tb]
	\caption{Enumerating satisfying teams in $\pdlnovee$, ordered by cardinality}
	\label{algo:enumerating_satteams}
	
	\KwIn{A team-based propositional formula $\varphi$ as in Equation ($\star$), $k\in\natural$}
	\KwOut{All teams $T$ for $\varphi$ with $T\models\varphi$, $1\leq|T|\leq k$}
	\BlankLine
	
	$\setsatteamsz{1} \leftarrow \{\{\vec{0}\}\}$\;
	\For{$\ell=2,\dots,k+1$}{
		\Parallel{
			\While{$\setsatteamsz{\ell-1}\neq \emptyset$}{
				Choose $T\in\setsatteamsz{\ell-1}$\;
\iflong
				\For{$T' \in \ftwo^n T$ (Algorithm \ref{algo:enumerating_orbits})}{
					\Output $T'$\;
					$\setsatteamsz{\ell-1} \leftarrow \setsatteamsz{\ell-1} \setminus \{T'\}$\;%
					\label{algline:enumerating_satteams_removal}%
				}
\else
				\lFor{$T' \in \ftwo^n T$ (Algorithm \ref{algo:enumerating_orbits})}{
					\Output $T'$ and $\setsatteamsz{\ell-1} \leftarrow \setsatteamsz{\ell-1} \setminus \{T'\}$%
					\label{algline:enumerating_satteams_removal}%
				}
\fi				
			}
		}
			\textbf{simultaneously} Compute $\setsatteamsz{\ell}$ by Algorithm \ref{algo:constructing_setsatteamsz}\;
		\lIf{$\setsatteamsz{\ell} = \emptyset$}{\Break}
	}
\end{algorithm}
\iflong
\begin{theorem}
	Algorithm $\ref{algo:enumerating_satteams}$ enumerates all satisfying teams $T$ for $\varphi$ with $1\leq|T|\leq k$ without duplicates.
\end{theorem}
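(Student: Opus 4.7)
The plan is to prove two properties separately: (a) \emph{completeness}, every team $T$ with $T \models \varphi$ and $1 \le |T| \le k$ is output at some point, and (b) \emph{no duplicates}, no team is output twice. Both will follow by induction on the team cardinality $\ell$, exploiting the structural results already established: orbits of $\ftwo^n \acts \setsatteams{\ell}$ partition $\setsatteams{\ell}$ (Proposition \ref{prop:orbit_partition} together with Theorem \ref{theorem:orbit_invariance}), $\setsatteamsz{\ell}$ contains at least one representative from each such orbit (Lemma \ref{lemma:representative_system_in_tzero}), Algorithm \ref{algo:enumerating_orbits} enumerates each orbit $\ftwo^n T$ without duplicates (Theorem \ref{theorem:generating_orbits}), and Algorithm \ref{algo:constructing_setsatteamsz} constructs $\setsatteamsz{\ell}$ correctly (Corollary \ref{coro:algorithm_setsatteamsz_complexity}).

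For the base case $\ell = 1$, observe that $\setsatteamsz{1} = \{\{\vec{0}\}\}$ is initialised correctly, since the only singleton team in $\setsatteams{1}$ containing $\vec{0}$ is $\{\vec{0}\}$ itself, and a conjunction of dependence atoms is satisfied by every singleton. Then $\ftwo^n \{\vec{0}\} = \{\{s\} : s \in \ftwo^n\} = \setsatteams{1}$, so Algorithm \ref{algo:enumerating_orbits} outputs every satisfying $1$-team exactly once. For the inductive step, assume the claim holds for cardinalities smaller than $\ell$. By the correctness of Algorithm \ref{algo:constructing_setsatteamsz}, the set $\setsatteamsz{\ell-1}$ is available and correct once the outer loop reaches iteration $\ell$. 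By Lemma \ref{lemma:representative_system_in_tzero}, every orbit of $\ftwo^n \acts \setsatteams{\ell-1}$ intersects $\setsatteamsz{\ell-1}$. Hence picking any $T \in \setsatteamsz{\ell-1}$ and expanding its orbit via Algorithm \ref{algo:enumerating_orbits} covers, over all iterations of the \textbf{while} loop, the union of all orbits, which by Proposition \ref{prop:orbit_partition} is exactly $\setsatteams{\ell-1}$. This gives completeness.

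For the absence of duplicates, three sources have to be ruled out. First, within a single orbit, Theorem \ref{theorem:generating_orbits} ensures that Algorithm \ref{algo:enumerating_orbits} itself emits no duplicates. Second, across distinct orbits, Proposition \ref{prop:orbit_partition} guarantees disjointness, so teams produced from different representatives cannot collide. The subtle point, and the main technical obstacle, is the third source: $\setsatteamsz{\ell-1}$ may contain several distinct representatives of the \emph{same} orbit, so naively iterating over $\setsatteamsz{\ell-1}$ could process a single orbit multiple times. This is precisely handled by line~\ref{algline:enumerating_satteams_removal} of Algorithm \ref{algo:enumerating_satteams}: whenever a team $T'$ is output, it is removed from $\setsatteamsz{\ell-1}$. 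Thus, after completely enumerating an orbit $\ftwo^n T$, all elements of $\ftwo^n T \cap \setsatteamsz{\ell-1}$ have been removed, and the next team chosen from $\setsatteamsz{\ell-1}$ must lie in a fresh orbit. Finally, teams of different cardinalities are trivially distinct, so the three layers of reasoning combine to give duplicate-freeness across the entire execution. Together with completeness, this proves the theorem.
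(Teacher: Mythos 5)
Your proposal is correct and follows essentially the same route as the paper's own (much terser) proof: completeness via Proposition \ref{prop:orbit_partition} and Lemma \ref{lemma:representative_system_in_tzero}, and duplicate-freeness via the orbit enumeration of Theorem \ref{theorem:generating_orbits} combined with the removal step in line \ref{algline:enumerating_satteams_removal}, which prevents any orbit from being processed twice. The induction over cardinality and the explicit three-way case split for duplicates are just a more detailed organisation of the same argument.
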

\begin{proof}
	 It is easy to see that all dependencies in Algorithm \ref{algo:constructing_setsatteamsz} and \ref{algo:enumerating_satteams} are resolved in time. By Proposition \ref{prop:orbit_partition} and Lemma \ref{lemma:representative_system_in_tzero} every satisfying team is outputted at least once. By removing every outputted element in line \ref{algline:enumerating_satteams_removal} no orbit is outputted twice, preventing duplicates.
\end{proof}\fi
Finally, we conclude.

\begin{theorem}
	\label{theorem:enumteamsize}
	\begin{enumerate}[label={\upshape(\roman*)}]
		\item $\penumteamplsize(\pdlnovee)\in\delayFPT$,
		\item $\enumteamplsize(\pdlnovee, f)\in\delayP$ for any poly.\ time computable function $f\in n^{\bigo(1)}$.
	\end{enumerate}
\end{theorem}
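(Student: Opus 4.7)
The plan is to assemble the two claims from the machinery built up in the preceding subsection. First, I would preprocess the input $\varphi \in \pdlnovee$ by applying the equivalence $(\star)$ to eliminate literals $x$ and $\neg x$: the reduced formula $\varphi'$ is a conjunction of dependence atoms, and any enumeration of satisfying teams for $\varphi'$ over $\var(\varphi') \subseteq \var(\varphi)$ lifts to one for $\varphi$ by extending every assignment with the forced values on $I\cup J$. This lifting is a constant-delay postprocessing and preserves team cardinalities, so it respects the ordering $\leq_{\text{size}}$.

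Next, I would invoke Algorithm~\ref{algo:enumerating_satteams} on $\varphi'$ with the appropriate cardinality bound --- for $\penumteamplsize$ the parameter $k$, and for $\enumteamplsize(\pdlnovee,f)$ the value $f(|\varphi|)$. Correctness (each satisfying team output exactly once, in order of increasing cardinality) follows directly from Proposition~\ref{prop:orbit_partition}, Lemma~\ref{lemma:representative_system_in_tzero}, and Lemma~\ref{lemma:algorithm_setsatteamsz}, so the remaining task is a clean delay bound. The displayed computation preceding the theorem amortises the construction of $\setsatteamsz{\ell}$ over the $\cardsatteams{\ell-1}$ outputs of the previous cardinality, contributing $\bigo(\ell^2|\varphi|)$ per output, while Theorem~\ref{theorem:generating_orbits} bounds the orbit enumeration delay by $\bigo(\ell^3 n)$. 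Summing yields a delay of $\bigo(k^3|\varphi|)$.

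From here the two conclusions are immediate. For (i), since the parameter is $\kappa = k$, the bound $\bigo(k^3|\varphi|) = g(k)\cdot p(|\varphi|)$ with $g(k) = k^3$ and $p$ linear witnesses membership in $\delayFPT$. For (ii), if $f\in n^{\bigo(1)}$, then $k\leq f(|\varphi|)\leq |\varphi|^c$ for some constant $c$, so the delay becomes $\bigo(|\varphi|^{3c+1})$, which is polynomial in the input length, establishing $\delayP$ membership.

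The main subtlety I would need to verify is the interleaving: one must ensure that by the time the algorithm begins outputting teams of cardinality $\ell$, the set $\setsatteamsz{\ell}$ has been fully constructed. This is precisely where Theorem~\ref{theorem:teams_per_zero_team} earns its keep, as it converts the total construction time $\cardsatteamsz{\ell-1}\cdot 2^n \cdot \bigo(\ell|\varphi|)$ of Algorithm~\ref{algo:constructing_setsatteamsz} into $\cardsatteams{\ell-1}\cdot \bigo(\ell^2|\varphi|)$ --- exactly the budget available when distributing the work across the previous layer's outputs, so that the precomputation for layer $\ell$ finishes strictly before the first team of cardinality $\ell$ must be emitted.
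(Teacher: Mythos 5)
Your proposal is correct and follows essentially the same route as the paper: the paper also derives the theorem by running Algorithm~\ref{algo:enumerating_satteams}, amortising the cost of Algorithm~\ref{algo:constructing_setsatteamsz} over the $\cardsatteams{\ell-1}$ outputs of the previous layer via Theorem~\ref{theorem:teams_per_zero_team} and Corollary~\ref{coro:algorithm_setsatteamsz_complexity}, combining this with the $\bigo(k^3 n)$ orbit-enumeration delay of Theorem~\ref{theorem:generating_orbits} to obtain an overall delay of $\bigo(k^3|\varphi|)$, from which (i) and (ii) follow by reading $k$ as the parameter or as $f(|\varphi|)\leq|\varphi|^c$ respectively. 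Your explicit attention to the interleaving subtlety and to the lifting from $\varphi'$ back to $\varphi$ only makes explicit what the paper leaves implicit.
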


\subsubsection{Consequences of sorting by cardinality}

In the previous section we have seen that the restriction on polynomial teams is sufficient to obtain a $\delayP$-algorithm for $\pdlnovee$. As we will see in this section, the restriction is not only sufficient, but also necessary when the output is sorted by its cardinality. Consequently, the algorithm presented above is optimal regarding output size.


\begin{lemma}
	\label{lemma:enumteamsize_counterexample}
	Let $k\geq 2$ and 
	$\varphi(x_1,\dots,x_k) \defeq \bigwedge_{i=1}^{k-1} \dep{x_i, x_k} \in \pdlnovee.$
	Then for any team $T\neq \emptyset$ with $T\models\varphi$ and $|T|\geq 3$ we have that $\left|T\restrict{\{x_k\}}\right| = 1$.
\end{lemma}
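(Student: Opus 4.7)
The plan is to argue by contradiction. Assume for contradiction that there exist $s, t \in T$ with $s(x_k) \neq t(x_k)$. By the contrapositive of the semantics of $\dep{x_i, x_k}$, disagreement on $x_k$ forces disagreement on $x_i$ for every $i \in \{1, \dots, k-1\}$. Hence $s$ and $t$ differ in every coordinate, i.e., viewing assignments as elements of $\ftwo^k$, we have $t = s + (1, 1, \dots, 1)$.

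Since $|T| \geq 3$, I can pick a third assignment $u \in T$ distinct from both $s$ and $t$. Since $u(x_k) \in \{0, 1\}$, it coincides with exactly one of $s(x_k), t(x_k)$. Without loss of generality assume $u(x_k) = s(x_k)$, so $u(x_k) \neq t(x_k)$. Applying the same contrapositive argument to $u$ and $t$, we conclude $u(x_i) \neq t(x_i)$ for all $i \in \{1, \dots, k-1\}$, which forces $u(x_i) = s(x_i)$ since the values lie in $\{0, 1\}$. Together with $u(x_k) = s(x_k)$, this gives $u = s$, contradicting the choice of $u$.

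Therefore no such pair $(s, t)$ exists, and all assignments in $T$ agree on $x_k$, which is exactly the statement $\bigl|T\restrict{\{x_k\}}\bigr| = 1$.

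I do not expect any real obstacle here — the proof is a direct two-step application of the contrapositive of the dependence atom semantics, using the assumption $|T| \geq 3$ only to supply a third assignment that produces the contradiction. The subtlety to double-check is simply that the formula $\varphi$ contains $\dep{x_i, x_k}$ for \emph{every} $i < k$, which is what allows the conclusion ``differ on $x_k$ implies differ on every other coordinate''.
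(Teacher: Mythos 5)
Your proof is correct and uses essentially the same argument as the paper: both derive from the contrapositive of the $\dep{x_i,x_k}$ semantics that an assignment disagreeing on $x_k$ must disagree on every coordinate and is therefore uniquely determined, which is incompatible with $|T|\geq 3$. The only cosmetic difference is that the paper routes the pairwise claim through downward closure of $\varphi$ to two-element subteams, while you apply the dependence-atom semantics directly to pairs in $T$; both are valid.
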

\iflong
\begin{proof}
	Let $T$ be a team with $T\models\varphi$ and $|T|\geq 3$. Set
	$T\restrict{x_k=i} \defeq \{s\in T : s(x_k) = i\}$ for $i\in\{0,1\}$.
	Assume that $|T\restrict{\{x_k\}}| = 1$ does not hold. Then w.l.o.g. $T\restrict{x_k=0} \neq \emptyset$ and $|T\restrict{x_k=1}| > 1$.
	Take $r\in T\restrict{x_k=0}$. For any $s\in T\restrict{x_k=1}$ we have that $\{r, s\}\models\varphi$ because $\varphi$ is downward closed. In particular, $\{r,s\}\models \dep{x_i, x_k}$ for all $i\in\{1,\dots,k-1\}$, yielding $r(x_i) \neq s(x_i)$ because of $r(x_k) \neq s(x_k)$. On that account $s$ is uniquely determined by $r$, contradicting $\left|T\restrict{x_k=1}\right| > 1$.
\end{proof}
\fi
\begin{theorem}
	\label{theorem:enumteamsize_restriction}
	Let $f$ be a polynomial time computable function. Then we have that
\iflong$$\enumteamplsize(\pdlnovee, f)\in\delayP\text{ if and only if }f\in n^{\bigo(1)}.$$
\else$\enumteamplsize(\pdlnovee, f)\in\delayP$ if and only if $f\in n^{\bigo(1)}.$\fi
\end{theorem}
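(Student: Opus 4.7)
The forward direction, $f \in n^{\bigo(1)} \Rightarrow \enumteamplsize(\pdlnovee, f) \in \delayP$, is immediate from Theorem~\ref{theorem:enumteamsize}(ii), so I would only develop the converse.

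For the converse I would argue by contradiction. Suppose there is a $\delayP$-algorithm with delay bounded by some polynomial $p$, yet $f \notin n^{\bigo(1)}$, so that for every polynomial $q$ there are infinitely many $n$ with $f(n) > q(n)$. The core observation is that merely writing a single output team of cardinality $m$ costs $\Omega(m)$ RAM steps, whence every team the algorithm outputs on an input of length $n$ must have cardinality at most $p(n)$.

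The test family I would exploit is the one from Lemma~\ref{lemma:enumteamsize_counterexample}, $\varphi_k \defeq \bigwedge_{i=1}^{k-1} \dep{x_i, x_k}$. Any non-empty subset of $\{s \in \setassign{\var(\varphi_k)} : s(x_k) = 0\}$ trivially satisfies $\varphi_k$, so for every $m \in \{1, \dots, 2^{k-1}\}$ the formula $\varphi_k$ admits a satisfying team of cardinality exactly $m$. To make the argument apply at every sufficiently large input length, I would pad $\varphi_k$ by appending trivially redundant conjuncts, e.g.\ repetitions of $\dep{x_1, x_k}$, obtaining a formula $\tilde\varphi_{k,n} \in \pdlnovee$ of size exactly $n$ whose satisfying teams coincide with those of $\varphi_k$.

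Given an $n$ with $f(n) > p(n)$, I would choose the smallest $k$ with $2^{k-1} > p(n)$; such $k$ is logarithmic in $p(n)$, so $|\varphi_k| \le n$ for large $n$ and $\tilde\varphi_{k,n}$ is well-defined. Running the hypothetical algorithm on $\tilde\varphi_{k,n}$ then forces it to emit a satisfying team of cardinality $\min(f(n), 2^{k-1}) > p(n)$, contradicting the cardinality bound established above. The only non-routine step I anticipate is the padding: producing a $\pdlnovee$-formula of exact length $n$ without altering the set of satisfying teams. Once this is in place, Lemma~\ref{lemma:enumteamsize_counterexample} enters only to certify that the test family has satisfying teams of every cardinality up to $2^{k-1}$, and the remainder reduces to the elementary output-size observation.
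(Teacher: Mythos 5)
Your forward direction matches the paper (it is exactly the citation of Theorem~\ref{theorem:enumteamsize}), and your choice of witness family $\varphi_k=\bigwedge_{i=1}^{k-1}\dep{x_i,x_k}$ is the paper's. The gap is in the step you call the ``core observation'': that writing a team of cardinality $m$ costs $\Omega(m)$ within a single delay, so no output team can exceed the delay bound $p(n)$. The paper's definition of delay (``elapsed time between outputting the $i$-th and $(i+1)$-th solution'') is deliberately not used this way: an enumeration algorithm may hold the current team in memory and move to the next solution by a small modification, so the delay is governed by how much \emph{consecutive} solutions differ, not by the raw size of a single solution. The paper makes this explicit right after the theorem (``the trick of examining the symmetric difference of consecutive teams gives rise to the previous theorem''), proves Theorem~\ref{theorem:symmetric_difference_lex} precisely to show that under the lexicographical order consecutive satisfying teams differ in at most $3$ elements, and leaves the lexicographic case as an open problem. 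Under your reading of the model, all of that would be vacuous: any order would trivially exclude $\delayP$ as soon as some solution is superpolynomially large. So your argument proves the statement only in a strictly stronger cost model than the one the paper works in, and it does not engage with the actual difficulty.

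Concretely, the missing idea is the real content of Lemma~\ref{lemma:enumteamsize_counterexample}, which you invoke only for the (trivial, lemma-free) fact that teams of every cardinality up to $2^{k-1}$ exist. What the lemma actually gives is that every $T\in\setsatteams{z}$ with $z\geq 3$ satisfies $|T\restrict{\{x_k\}}|=1$, i.e.\ $\setsatteams{z}$ splits into teams that are constantly $0$ on $x_k$ and teams that are constantly $1$ on $x_k$, and both classes are nonempty. Because $\leq_\text{size}$ forces the whole block $\setsatteams{z}$ to be emitted contiguously, some two consecutive outputs lie in different classes and hence differ in at least $z$ positions; it is this transition that forces a delay of at least $z>|\varphi|^c$ for $z\defeq\min\{f(k),2^{k-1}\}$ chosen large enough. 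Your padding device for hitting input length exactly $n$ is a secondary concern (the paper avoids it by comparing $z$ directly against $(4k)^c\geq|\varphi|^c$), but the proof should be rebuilt around the symmetric-difference argument rather than the per-solution output cost.
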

\begin{proof}
	``$\Leftarrow$'': immediately follows from Theorem \ref{theorem:enumteamsize}.
	
	``$\Rightarrow$'': Let $f\notin n^{\bigo(1)}$. Assume that $\enumteamplsize(\pdlnovee, f)\in\delayP$ holds via an algorithm with a delay bounded by $n^c$, $c\in\natural$. Then there exists $k\in\natural$ such that 
\iflong$$z \defeq \min \{f(k), 2^{k-1}\} > 4^c\cdot k^c \geq k\geq 3.$$
\else$z \defeq \min \{f(k), 2^{k-1}\} > 4^c\cdot k^c \geq k\geq 3.$ \fi
	Let $\varphi$ be as in Lemma \ref{lemma:enumteamsize_counterexample}. Obviously, there exist teams $T_0, T_1\in\setsatteams{z}$ with $s(x_k) = i$ for all $s\in T_i$, $i\in\{0,1\}$. Since the elements in $\setsatteams{z}$ have to be outputted in succession and $\left|T\restrict{\{x_k\}}\right| = 1$ for any $T\in\setsatteams{z}$, we can choose $T_0$ and $T_1$ such that both teams are outputted in consecutive order. However, both teams differ in at least $z$ bits describing the evaluation at $x_k$. For this reason the delay is at least
	$z > (4k)^c \geq (|\varphi|)^c,$
	contradicting that the delay is bounded by $n^c$.
\end{proof}

\begin{corollary}
	$\enumteamplsize(\pdlnovee)\notin \delayP$.
\end{corollary}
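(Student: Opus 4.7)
The plan is to derive this as an immediate corollary of Theorem~\ref{theorem:enumteamsize_restriction}. Recall the convention fixed right after the definition of \penumteampl{}: the bare notation $\enumteampl(\Phi)$ abbreviates $\enumteampl(\Phi,\ n \mapsto 2^n)$, and analogously $\enumteamplsize(\pdlnovee)$ unfolds to $\enumteamplsize(\pdlnovee,\ n \mapsto 2^n)$.

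First I would verify the hypothesis of Theorem~\ref{theorem:enumteamsize_restriction} for the instantiation $f(n) = 2^n$. On one hand, $f$ is polynomial time computable: given $n$ in unary (or binary) one outputs the binary representation of $2^n$, namely a $1$ followed by $n$ zeros, in time linear in $n$. On the other hand, $f \notin n^{\bigo(1)}$, since $2^n$ eventually dominates every polynomial in $n$. Hence $f$ satisfies the prerequisites of Theorem~\ref{theorem:enumteamsize_restriction} but lies outside the class $n^{\bigo(1)}$ that Theorem~\ref{theorem:enumteamsize_restriction} characterises as exactly the polynomially bounded case.

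Therefore, applying the ``only if'' direction of Theorem~\ref{theorem:enumteamsize_restriction} yields $\enumteamplsize(\pdlnovee,\ n \mapsto 2^n) \notin \delayP$, i.e.\ $\enumteamplsize(\pdlnovee) \notin \delayP$. There is no real obstacle here; the work has already been done in Theorem~\ref{theorem:enumteamsize_restriction} via Lemma~\ref{lemma:enumteamsize_counterexample}, where an explicit family of $\pdlnovee$-formulas $\varphi$ is produced for which any two satisfying teams of maximal cardinality differ in exponentially many bits along the distinguished variable $x_k$. The only thing to watch in writing it up cleanly is to make the unfolding of the shorthand $\enumteamplsize(\pdlnovee)$ explicit so that the reader sees why Theorem~\ref{theorem:enumteamsize_restriction} is immediately applicable.
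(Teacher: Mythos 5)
Your proposal is correct and follows exactly the paper's own one-line argument: unfold $\enumteamplsize(\pdlnovee)$ as $\enumteamplsize(\pdlnovee,\ n\mapsto 2^n)$, note that $2^n\notin n^{\bigo(1)}$ while being polynomial time computable, and apply the ``only if'' direction of Theorem~\ref{theorem:enumteamsize_restriction}. The extra care you take in checking the hypotheses is fine but adds nothing beyond what the paper already does implicitly.
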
\iflong
\begin{proof}
	Since $(n\mapsto 2^n)\notin n^{\bigo(1)}$ the claim follows immediately from Theorem \ref{theorem:enumteamsize_restriction}.
\end{proof}\fi

The trick of examining the symmetric difference of consecutive teams gives rise to the previous theorem. Unfortunately this trick cannot be applied to arbitrary orders and certainly fails for the lexicographical order. In order to prove this claim, consider Theorem \ref{theorem:symmetric_difference_lex} with $S=\setassign{\var(\varphi)}$, $X=\left\{T\in\setteam{\var(\varphi)} : T\models \varphi\right\}$.

\begin{theorem}
	\label{theorem:symmetric_difference_lex}
	Let $S=\{s_1,\dots,s_n\}$ be a finite totally ordered set and $X\subseteq\mathcal{P}(S)$ be a downward closed set, meaning
	$T\in X \Rightarrow R\in X\ \forall R\subseteq T.$
	When $X$ is ordered lexicographically in respect with the order on $S$, the symmetric difference $\triangle$ between two consecutive elements in $X$ is at most $3$.
\end{theorem}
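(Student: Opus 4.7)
The plan is to pinpoint the immediate successor $T$ of $R$ in $(X, \leq_{\mathrm{lex}})$ and bound $|R \triangle T|$ by a short case distinction, with downward closure doing the real work. Write $R = \{r_1 < \cdots < r_k\}$, $T = \{t_1 < \cdots < t_m\}$ and let $i$ denote the length of the common sorted prefix. If $R$ is a (proper) prefix of $T$, i.e.\ $i = k < m$, then by downward closure $R \cup \{t_{k+1}\} \in X$; this set strictly exceeds $R$ in the lex order (whose shorter sequence is its proper prefix) and is at most $T$ (being a prefix of $T$), so minimality of $T$ gives $T = R \cup \{t_{k+1}\}$ and $|R \triangle T| = 1$.

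Otherwise $i < k$ and the definition of $\leq_{\mathrm{lex}}$ gives $t_{i+1} > r_{i+1}$. The same downward closure plus minimality argument, applied to the subset $\{r_1, \ldots, r_i, t_{i+1}\} \subseteq T$, forces $T = \{r_1, \ldots, r_i, t_{i+1}\}$, so $|R \triangle T| = (k - i) + [t_{i+1} \notin R]$. The main step is therefore to show $i \geq k - 2$, which is automatic for $k \leq 2$; for $k \geq 3$ I would introduce the witness $T^{\star} := \{r_1, \ldots, r_{k-2}, r_k\}$. As a subset of $R$ it lies in $X$ by downward closure, and it lex-exceeds $R$ because the sorted sequences agree on the first $k-2$ entries and then $r_k > r_{k-1}$ at position $k-1$. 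If $i$ were strictly below $k-2$, then $T^{\star}$ and $T$ would share the prefix $\{r_1, \ldots, r_i\}$, while at position $i+1$ we would have $r_{i+1} < t_{i+1}$, yielding $R <_{\mathrm{lex}} T^{\star} <_{\mathrm{lex}} T$ and contradicting that $T$ is the immediate successor of $R$. Hence $i \in \{k-1, k-2\}$ and $|R \triangle T| \leq 2 + 1 = 3$.

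The main obstacle is keeping the lex comparisons straight in the presence of sets of different sizes, where being a prefix forces one set to be lex-smaller than any of its proper extensions, yet one can still be lex-smaller than a set that includes it in a different position. Once the reducing candidate $T^{\star}$ is spotted and the ``minimal-extension'' trick for case 1 is in place, downward closure closes both cases immediately.
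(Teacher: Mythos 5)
Your proof is correct, and it takes a genuinely different route from the paper's. The paper argues by induction on $|S|$: it partitions $X$ into the blocks $X_i$ of sets whose minimum is $s_i$, observes that stripping $s_i$ from each block yields a downward closed family over a smaller ground set (so the induction hypothesis controls consecutive elements inside each block), and then bounds the jumps between blocks by showing that each block's lex-maximum has at most two elements while its lex-minimum is the singleton $\{s_i\}$. You instead characterise the immediate successor $T$ of $R$ directly: either $T = R \cup \{t_{k+1}\}$, or $T = \{r_1,\dots,r_i,t_{i+1}\}$ with $i \geq k-2$, the lower bound on $i$ coming from the witness $T^{\star} = \{r_1,\dots,r_{k-2},r_k\} \in X$ that would sit strictly between $R$ and $T$ whenever $i < k-2$. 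This is non-inductive and buys strictly more than the stated theorem, namely an explicit description of what the lex-successor of any $R\in X$ must look like; the paper's induction, by contrast, generalises more mechanically if one wanted a statement about blocks of the partition. Two small points to make explicit when writing yours up: (a) in the ``otherwise'' case you tacitly use $i<m$ as well as $i<k$ --- this holds because $i=m<k$ would make $T$ a proper prefix of $R$ and hence $T <_{\text{lex}} R$, contradicting $R <_{\text{lex}} T$; (b) the count $|R \triangle T| = (k-i) + [t_{i+1}\notin R]$ overshoots by one when $t_{i+1}\in R$ (the correct value is then $k-i-1$), but since your expression is an upper bound in all cases the conclusion $\leq 3$ is unaffected.
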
\iflong
\begin{proof}
	W.l.o.g, we assume that $\{s_i\}\in X$ for all $i\in\{1,\dots,n\}$. We prove the claim by induction over $n$. The induction basis for $n=1$ is obvious. Consider the induction step from $n-1$ to $n$.
	
	Set $X_i \defeq \{T\in X : s_1,\dots,s_{i-1}\notin T,\ s_i\in T\},$	$\overline{X_i} \defeq \{T\setminus \{s_i\} : T\in X_i\}$ for $i\in \{1,\dots,n\}$. 
	The subsets $X_i$ together with $\{\emptyset\}$ form a partition of $X$. Furthermore it is easy to see that the sets $\overline{X_i}$ are downward closed and that $T < T'$ for all $T\in X_i$, $T'\in X_j$ with $i < j$.
	
	By induction hypothesis the symmetric difference for consecutive elements in $\overline{X_i}$ is at most $3$. As the order on $X_i$ corresponds to the order on $\overline{X_i}$, the same applies to $X_i$. It remains to check the consecutive elements located in different $X_i$'s. For any element $T = \{s_{i_1},\dots,s_{i_k}\} \in X_i$ with $s_{i_1} < \dots < s_{i_k}$ we have $T \geq \{s_{i_1}\} = \{s_i\} \in X_i$ and $T \leq \{s_{i_1}, s_{i_k}\} \in X_i$. Consequently:
	$$\{s_i\} = \min_{T\in X_i} T,\quad \left|\max_{T\in X_i} T\right| \leq 2\quad \forall i\in\{1,\dots,n\}.$$
	In fact, we have that
	\begin{align*}
		\left| \emptyset\ \triangle\ \min_{T\in X_1} T \right| \leq 0 + 1 < 3, \text{ and }
		\left| \max_{T\in X_i}T\ \triangle\ \min_{T\in X_{i+1}}T \right| \leq 2 + 1 = 3\quad\forall i\in\{1,\dots,n-1\}.
	\end{align*}
\end{proof}\fi

\subsection{Limiting memory space}

Next we examine the memory usage of Algorithm \ref{algo:enumerating_satteams}. Throughout the execution, $\mathcal{D}_2[\{\vec{0}\}]$, $\mathcal{D}_k$ and $\setsatteamsz{k}$ have to be saved. However the size of those lists increases exponentially when raising the size of the outputted teams or the amount of variables occurring in the formula $\varphi$. In general, Algorithm \ref{algo:enumerating_satteams} requires space $\bigo(2^{2^n})$, and $\bigo(2^n)$ when fixing the parameter $k$. In fact, any algorithm that saves a representative system for the orbits of $\ftwo^n\acts\setsatteams{k}$ cannot perform in polynomial space by the following theorem. For this reason we have to discard the group action of flipping bits when limiting memory space to polynomial sizes.

\begin{theorem}
	\label{theorem:amount_orbits_not_polynomial}
	Let $1\neq k\in\natural$ and $n\in\natural$. We set $\varphi \defeq \dep{x_1,x _2, \dots, x_n}$. Then the amount of orbits of $\ftwo^n\acts\setsatteams{k}$ is not polynomial in $n$.
\end{theorem}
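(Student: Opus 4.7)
My plan is to fix $k \geq 2$ and exhibit an explicit exponential lower bound on the number of orbits by counting satisfying $k$-teams for $\varphi = \dep{x_1,\dots,x_n}$ directly.

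First I would observe that the dependence atom $\dep{x_1,\dots,x_n}$ requires $x_n$ to be a function of $x_1,\dots,x_{n-1}$ on $T$. Since the assignments in a team are pairwise distinct, a team $T$ satisfies $\varphi$ if and only if the restriction $T\restrict{\{x_1,\dots,x_{n-1}\}}$ is injective, i.e., all members of $T$ already differ on the first $n-1$ coordinates. This turns the enumeration into a pure combinatorial count: pick $k$ distinct elements of $\ftwo^{n-1}$ (in $\binom{2^{n-1}}{k}$ ways) and then extend each of them independently to a full assignment by freely choosing a value for $x_n$ (in $2^k$ ways). Hence
\[
\cardsatteams{k} \;=\; \binom{2^{n-1}}{k}\cdot 2^k.
\]

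Next, the key quantitative link between $\cardsatteams{k}$ and the number of orbits: since the group $\ftwo^n$ has order $2^n$, every orbit of $\ftwo^n\acts\setsatteams{k}$ has cardinality at most $2^n$ (by Proposition \ref{prop:orbit_stabilizer}), so by Proposition \ref{prop:orbit_partition} the number of orbits is bounded below by $\cardsatteams{k}/2^n$. Plugging in the count above and using the standard estimate $\binom{2^{n-1}}{k}\geq (2^{n-1}/k)^k$, I get
\[
\#\text{orbits} \;\geq\; \frac{\binom{2^{n-1}}{k}\cdot 2^k}{2^n} \;\geq\; \frac{2^{n(k-1)}}{k^k}.
\]
For any fixed $k\geq 2$ this lower bound is $\Omega(2^n)$ and therefore super-polynomial in $n$, proving the theorem.

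I do not expect a serious obstacle here: the only subtlety is making sure the counting argument for $\cardsatteams{k}$ is stated cleanly (distinguishing ``two assignments coinciding on $x_1,\dots,x_{n-1}$'' from ``coinciding as assignments''), and making sure the passage from $\cardsatteams{k}$ to the orbit count uses only the trivial bound $|\ftwo^n T|\leq 2^n$ rather than anything about the stabiliser, so that it works uniformly. Alternatively one could derive the same estimate via Theorem \ref{theorem:teams_per_zero_team} by lower bounding $\cardsatteamsz{k}$, but the direct ``orbit size $\leq |G|$'' route is shorter and avoids invoking the full orbit-stabiliser decomposition.
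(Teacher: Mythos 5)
Your proof is correct and follows essentially the same route as the paper: both arguments lower-bound the number of orbits by $\cardsatteams{k}/2^n$ (you via the trivial estimate $|\ftwo^n T|\leq|\ftwo^n|=2^n$, the paper via keeping only the identity summand in the Cauchy--Frobenius lemma, which yields exactly the same inequality) and then observe that this quotient grows exponentially in $n$ for fixed $k\geq 2$. The only difference is cosmetic: you compute $\cardsatteams{k}=\binom{2^{n-1}}{k}\cdot 2^k$ exactly from the structure of $\dep{x_1,\dots,x_n}$, whereas the paper lower-bounds the orbit count by injecting the orbits of all $k$-teams over $n-1$ variables.
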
\iflong
\begin{proof}
	Note that each orbit of $\ftwo^{n-1}$ on the set of all $k$-teams over $n-1$ variables maps to an orbit of $\ftwo^n\acts\setsatteams{k}$ by extending all assignments of a team so that $x_n$ is assigned to the same value. As we have
	$$f(n)\in n^{\bigo(1)} \Leftrightarrow f(n-1)\in n^{\bigo(1)}$$
	for any function $f\colon\natural\rightarrow\natural$, we may assume that $\varphi$ is equivalent to $1$ with $|\var(\varphi)|=n$.
	
	By the Cauchy-Frobenius lemma (see Proposition \ref{prop:cauchy_frobenius}) the amount of orbits is at least
	$$\frac{|\{T\in\setteam{\var(\varphi)}: |T|=k\}|}{2^n}$$
	when neglecting all summands except the one for $\vec{0}\in\ftwo^n$. That is why the number of orbits in $\setsatteams{k}$ has to be larger than $\binom{2^n}{k} / 2^n$, which already increases exponentially in $n$.
\end{proof}\fi



In the previous sections we had to limit the cardinality of outputted teams for obtaining polynomial delay. As the following theorem shows, this measure is necessary as well when demanding polynomial space.

\begin{theorem}
	Let $\Phi$ be any fragment of team-based propositional logic and $f$ be a function with $f\notin n^{\bigo(1)}$ such that for any $n\in\natural$ there exists a formula $\varphi_n\in\Phi$ in $n$ variables with at least $2^{f(n)}$ satisfying teams. Then it follows that $\enumteampl(\Phi)$ cannot be enumerated in polynomial space.
\end{theorem}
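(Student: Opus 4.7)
My plan is a standard pigeonhole argument on the configuration space of the algorithm. Assume for contradiction that $\mathcal{A}$ is a deterministic enumeration algorithm for $\enumteampl(\Phi)$ whose space on every input $\varphi$ is bounded by some polynomial $p(|\varphi|)$. In the RAM model, with the paper's standing convention that every register stores a value polynomial in $|\varphi|$, the global state of $\mathcal{A}$ (program counter plus the polynomially many in-use registers, each of $O(\log|\varphi|)$ bits) ranges over at most $2^{q(|\varphi|)}$ distinct configurations for some polynomial $q$.

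Next I would invoke determinism together with termination: if any configuration ever recurred in the computation on $\varphi$, then $\mathcal{A}$ would enter an infinite loop and fail to terminate. Therefore the total number of steps of $\mathcal{A}$ on input $\varphi$ is at most $2^{q(|\varphi|)}$, and in particular the number of solutions it outputs is bounded by $2^{q(|\varphi|)}$.

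Finally I would instantiate the bound on the witness family $\{\varphi_n\}_n$ supplied by the hypothesis. Reading the statement in the intended way, $|\varphi_n|$ is polynomial in $n$, which is the case in the motivating applications (e.g.\ Theorem \ref{theorem:amount_orbits_not_polynomial} with $\varphi_n = \dep{x_1,\dots,x_n}$), and in general one can take a minimal-size witness. Write $|\varphi_n| \leq r(n)$ for a polynomial $r$. Since $\mathcal{A}$ on input $\varphi_n$ must output at least $2^{f(n)}$ distinct teams, we obtain $2^{f(n)} \leq 2^{q(r(n))}$ and hence $f(n) \leq q(r(n))$, a polynomial in $n$, contradicting $f \notin n^{\bigo(1)}$.

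The main obstacle I expect is justifying the configuration count cleanly in the RAM model: one must argue that polynomial space produces singly---rather than doubly---exponentially many configurations, which pins down to the polynomial-value convention on registers recalled in the preliminaries. A secondary subtlety is the implicit polynomial bound on $|\varphi_n|$; if the witness family were allowed to grow super-polynomially, the counting estimate would still yield $f(n) \leq p(|\varphi_n|)^{O(1)}$ but no contradiction with $f \notin n^{\bigo(1)}$, so the theorem must be read with witnesses of polynomial description length.
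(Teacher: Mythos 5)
Your proposal is correct and follows essentially the same route as the paper: a counting argument showing that a polynomial-space RAM can pass through only $2^{\mathrm{poly}(|\varphi|)}$ distinct configurations, while producing $2^{f(n)}$ distinct outputs forces at least that many configurations. You are in fact somewhat more careful than the paper's own half-page proof, both in justifying the configuration count via determinism plus termination and in flagging the implicit assumption that $|\varphi_n|$ is polynomially bounded in $n$ (which the paper's proof silently relies on and which does hold for its witness family $\dep{x_1,\dots,x_n}$).
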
\iflong
\begin{proof}
	Any enumeration algorithm enumerating $\sol(\varphi_n)$ has to output $2^{f(n)}$ different teams. The same amount of configurations have to be adopted. In order to distinguish these, the configurations are encoded by at least $f(n)$ bits. However, when considering a RAM performing in polynomial space, the contents of all registers may be encoded by a polynomial amount of bits. For this reason a RAM enumerating $\sol(\varphi_n)$ cannot perform in polynomial space.
\end{proof}\fi

\begin{corollary}
	The problem $\enumteampl(\pdlnovee)$ cannot be enumerated in polynomial space.
\end{corollary}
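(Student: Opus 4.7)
The plan is to apply the preceding theorem to the fragment $\Phi = \pdlnovee$, which reduces the task to exhibiting, for each $n$, a formula $\varphi_n\in\pdlnovee$ in exactly $n$ variables whose number of satisfying teams is at least $2^{f(n)}$ for some $f\notin n^{\bigo(1)}$.

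First, I would choose $\varphi_n$ to be a $\pdlnovee$ formula that is a team-theoretic tautology on $\var(\varphi_n)=\{x_1,\dots,x_n\}$. A convenient choice is the conjunction of trivial dependence atoms
$$\varphi_n \defeq \bigwedge_{i=1}^{n} \dep{\{x_i\},\{x_i\}},$$
since each $\dep{\{x_i\},\{x_i\}}$ is satisfied by every team (any two assignments agreeing on $x_i$ trivially agree on $x_i$), and the conjunction therefore is too. Hence every non-empty team $T\in\setteam{\var(\varphi_n)}$ satisfies $\varphi_n$.

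Next I would count: since $|\setassign{\var(\varphi_n)}|=2^n$, the number of non-empty satisfying teams is exactly $2^{2^n}-1\geq 2^{2^{n-1}}$ for $n\geq 2$. Setting $f(n)\defeq 2^{n-1}$ gives $f\notin n^{\bigo(1)}$ and $|\sol(\varphi_n)|\geq 2^{f(n)}$, so the hypothesis of the previous theorem is met with $\Phi=\pdlnovee$. Invoking that theorem directly yields the corollary.

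There is no genuine obstacle here; the only thing to watch is that $\varphi_n$ lies in $\pdlnovee$ (it does, being a conjunction of dependence atoms with no $\classvee$) and that $\var(\varphi_n)$ has exactly $n$ elements so that the witness for the hypothesis of the preceding theorem is clean. The doubly exponential blow-up in the number of satisfying teams is so large that essentially any family of tautological $\pdlnovee$-formulas mentioning all $n$ variables would serve equally well.
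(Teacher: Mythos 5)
Your proof is correct and follows essentially the same route as the paper: both apply the preceding theorem to an explicit family of $\pdlnovee$-formulas in $n$ variables with at least $2^{2^{n-1}}$ satisfying teams (the paper uses the single atom $\dep{x_1,\dots,x_n}$ and counts the teams constant on $x_n$, whereas you use a tautological conjunction of trivial dependence atoms and count all non-empty teams). The difference in witness formula is immaterial.
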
\iflong
\begin{proof}
	Let $n\in\natural$. Set $\varphi_n \defeq \dep{x_1,x _2, \dots, x_n}$. All teams $T$ with $s(x_n) = 0$ for all assignments $s\in T$ satisfy $\varphi_n$. For this reason at least $2^{2^{n-1}}$ satisfying teams exist. Because of $2^{n-1}\notin n^{\bigo(1)}$, the claim follows by the previous theorem.
\end{proof}\fi

We now present an algorithm enumerating $\enumteamplsize(\pdlnovee, f)$ for any $f\in n^{\bigo(1)}$ in polynomial space. Compared to Algorithm \ref{algo:enumerating_satteams}, it saves memory space by recomputing the satisfying teams of lower cardinality instead of storing them in a list. As a downside we have to accept incremental delays.

Then, we define a unary relation $\HasNext$ on $\setassign{\var(\varphi)}$ by
	$s\in\HasNext$ if and only if $\exists t\in\setassign{\var(\varphi)} : s < t.$
	For any $s\in\HasNext$ let $\Next(s)$ be the unambiguous assignment such that $s < \Next(s)$ holds but $s < t < \Next(s)$ does not hold for any assignment $t$. We denote the smallest element in $\setassign{\var(\varphi)}$ by $s_\first$.
	The largest element is denoted by $s_\last$.
	As already mentioned when defining the lexicographical order, we assume that $\HasNext$, $\Next$ and $s_\first$ may be determined in $\bigo(n)$ time.

\iflong
\begin{lemma}
	\label{lemma:mc_pdlnovee_fpt}
	Let $T$ be a team with cardinality $k$. Then $T\models\varphi$ can be checked in $\bigo(k^2 |\varphi|)$ time.
\end{lemma}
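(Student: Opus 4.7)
The plan is to proceed by straightforward structural induction / traversal of $\varphi$, exploiting the fact that $\pdlnovee$ contains only conjunction as a binary connective, so $\varphi$ is essentially a conjunction of atoms (literals $x$, $\neg x$, the constants $0$, $1$, and dependence atoms $\dep{P,Q}$).

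First I would observe that by the semantics of $\wedge$ we have $T \models \varphi_1 \wedge \varphi_2$ iff $T \models \varphi_1$ and $T \models \varphi_2$, so it suffices to evaluate each atom of $\varphi$ on $T$ and take the conjunction of the results. Thus the total running time is bounded by the sum of the evaluation times of the individual atoms (plus a linear overhead for traversing the syntax tree of $\varphi$, which is $\bigo(|\varphi|)$).

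Next, I would bound the cost of evaluating a single atom $\alpha$ on $T$ by $\bigo(k^2 |\alpha|)$:
for a literal $x$ or $\neg x$, one scans through all $k$ assignments and checks a single bit, costing $\bigo(k)$;
for the constants $0$ and $1$ one performs an emptiness test or returns true, costing $\bigo(1)$;
for a dependence atom $\dep{P,Q}$ one iterates over all $\binom{k}{2}$ unordered pairs $\{s,t\} \subseteq T$ and verifies that $s\restrict{P} = t\restrict{P}$ implies $s\restrict{Q} = t\restrict{Q}$, each such check costing $\bigo(|P|+|Q|) = \bigo(|\alpha|)$, yielding $\bigo(k^2 |\alpha|)$ in total. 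In each case the bound $\bigo(k^2 |\alpha|)$ (absorbing any constants) is valid.

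Finally, summing over all atoms $\alpha_1,\dots,\alpha_m$ occurring in $\varphi$, the total cost is
$\sum_{i=1}^m \bigo(k^2 |\alpha_i|) = \bigo\bigl(k^2 \sum_i |\alpha_i|\bigr) = \bigo(k^2 |\varphi|),$
which yields the claim. There is no real obstacle here; the only mild point to note is that the naive pairwise check for $\dep{P,Q}$ already suffices and that the per-atom cost for literals is dominated by the dependence-atom cost within the quoted $\bigo(k^2|\varphi|)$ bound, so no refined data structure (e.g.\ sorting by $P$-restrictions) is needed.
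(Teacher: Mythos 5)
Your proposal is correct and amounts to the same argument as the paper's: both reduce the check to examining all $\bigo(k^2)$ pairs of assignments against the (conjunction of) dependence atoms, at cost $\bigo(|\varphi|)$ per pair. The only cosmetic difference is that the paper packages the pairwise reduction as an appeal to the $2$-coherence of $\varphi$ (Propositions \ref{prop:dep_coherent} and \ref{prop:conjunction_coherent}) and reuses the $\bigo(|\varphi|)$ cost of checking a single $2$-team from Corollary \ref{coro:algorithm_setsatteamsz_complexity}, whereas you unfold this directly from the semantics of $\wedge$ and $\dep{P,Q}$.
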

\begin{proof}
	Because of the $2$-coherence of $\varphi$ it is enough to check all $2$-subteams of $T$. By the proof of Corollary \ref{coro:algorithm_setsatteamsz_complexity} checking a $2$-team is accomplished in $\bigo(|\varphi|)$ time. As $T$ has $\bigo(k^2)$ $2$-subteams, the claim follows.
\end{proof}\fi

\begin{algorithm}[t]
	\caption{Enumerating satisfying teams in polynomial space, ordered by cardinality}
	\label{algo:enumerating_satteams_space}
	
	\KwIn{A team-based propositional formula $\varphi$ as in Equation ($\star$)}
	\KwOut{All teams $T$ for $\varphi$ with $T\models\varphi$, $1\leq|T|\leq f(|\varphi|)$}
	\BlankLine
	
	\For{$k=1,\dots,f(|\varphi|)$}{
		$T \leftarrow \{s_\first\}$\;
		\While{\True}{
			\lIf{$|T| = k$ \And $T\models\varphi$}{\Output $T$}
			$s \leftarrow \max(T)$\;
			\lIf{$|T|< k$ \And $T\models\varphi$ \And $s\in\HasNext$}{
				$T\leftarrow T \cup \{\Next(s)\}$%
				\label{algline:enumerating_satteams_space_increase}%
			}
			\lElseIf{$s\in\HasNext$}{
				$T \leftarrow T \setminus \{s\} \cup \{\Next(s)\}$%
				\label{algline:enumerating_satteams_space_next}%
			}
			\uElseIf{|T| > 1}{
				\iflong
				$T \leftarrow T \setminus \{s\}$\;%
				\label{algline:enumerating_satteams_space_reduce}
				$s \leftarrow \max(T)$\;
				$T \leftarrow T \setminus \{s\} \cup \{\Next(s)\}$\;
				\else
				$T \leftarrow T \setminus \{s\}$, $s \leftarrow \max(T)$, $T \leftarrow T \setminus \{s\} \cup \{\Next(s)\}$\;
				\fi
			}
			\lElse{
				\Break
			}
		}
	}
\end{algorithm}
\iflong
Let $k\in\natural$. We write $\mathcal{M}_k$ for the set of teams $T$ is assigned to during the $k$-th iteration of the outer loop of Algorithm \ref{algo:enumerating_satteams_space}.

\begin{lemma}
	\label{lemma:enumerating_satteams_space_next}
	Let $S\in\mathcal{M}_k$ be a nonempty set such that $s \defeq \max(S)\in\HasNext$. Then it follows that
	$S\setminus\{s\}\cup\{\Next(s)\}\in\mathcal{M}_k.$
	In particular, we have 
	$S\setminus\{s\}\cup\{t\}\in\mathcal{M}_k$ for all $t\in\setassign{\var(\varphi)}$ with $t\geq s$.
\end{lemma}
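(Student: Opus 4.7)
The plan is to induct on $d \defeq |\{t \in \setassign{\var(\varphi)} : t > s\}|$, which is at least $1$ since $s \in \HasNext$. The base of the argument is an inspection of the transition of Algorithm~\ref{algo:enumerating_satteams_space} from state $T = S$: because $s \in \HasNext$, the backtrack branch cannot fire, so either the ``grow'' branch applies (Case A, when $|S| < k$ and $S \models \varphi$), moving $T$ to $S \cup \{\Next(s)\}$, or the ``advance'' branch applies (Case B), moving $T$ directly to $S \setminus \{s\} \cup \{\Next(s)\}$. In Case B the conclusion is immediate, so I would focus on Case A.

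For Case A, I would first handle the base case $d = 1$: then $\Next(s) = s_\last$ and the new state $S \cup \{s_\last\}$ has maximum $s_\last \notin \HasNext$. Since $|S \cup \{s_\last\}| \geq 2$, the next loop iteration hits the backtrack branch, which removes $s_\last$ to recover $S$, resets the maximum to $s$, and then advances it to $\Next(s)$, producing $S \setminus \{s\} \cup \{\Next(s)\} \in \mathcal{M}_k$. For the inductive step $d > 1$, I would apply the induction hypothesis iteratively: $S \cup \{\Next(s)\} \in \mathcal{M}_k$ has maximum $\Next(s) \in \HasNext$ with associated quantity $d - 1 < d$, so the hypothesis yields $S \cup \{\Next(\Next(s))\} \in \mathcal{M}_k$; chaining $d - 1$ such applications reaches $S \cup \{s_\last\} \in \mathcal{M}_k$, and a single further backtrack transition then concludes exactly as in the base case.

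The ``in particular'' addendum is a straightforward iteration of the first part along the chain $S,\ S \setminus \{s\} \cup \{\Next(s)\},\ S \setminus \{s\} \cup \{\Next(\Next(s))\},\ \ldots$, stopping once the current maximum matches the target $t$; for $t = s$ the statement is vacuous.

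The main obstacle is Case A of the inductive step: the grow branch can extend $T$ far beyond $S$, and one must argue that the algorithm eventually reaches $S \cup \{s_\last\}$ and backtracks to uncover the original maximum $s$. The key observation making this work is that $\Next$ can be applied only finitely many times before reaching $s_\last$, so the chain of induction-hypothesis applications necessarily terminates at precisely the state from which the backtrack step recovers the required team $S \setminus \{s\} \cup \{\Next(s)\}$.
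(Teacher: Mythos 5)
Your proof is correct and follows essentially the same route as the paper's: case-split on which branch of the while loop fires from state $T=S$, and in the grow case climb to $S\cup\{s_\last\}\in\mathcal{M}_k$ and recover $S\setminus\{s\}\cup\{\Next(s)\}$ via the backtrack branch. The only difference is the induction measure --- you induct on the number of assignments above $\max(S)$ and chain single-step applications of the hypothesis, whereas the paper inducts on $k-|S|$ (so the base case is $|S|=k$, where the grow branch is disabled) and invokes the ``in particular'' clause once with $t=s_\last$; both are valid.
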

\begin{proof}
	We conduct an induction over $k - |S|$.
	
	Induction basis ($|S|=k$): Because of $|S|\nless k$ and $s\in\HasNext$ line $\ref{algline:enumerating_satteams_space_next}$ is executed and $T$ is assigned to $S\setminus\{s\}\cup\{\Next(s)\}$.
	
	Induction step ($|S| + 1 \rightarrow |S|$, $1\leq |S| < k$): If $S\not\models \varphi$, line $\ref{algline:enumerating_satteams_space_next}$ is executed as before and the claim follows. If $S\models \varphi$, line \ref{algline:enumerating_satteams_space_increase} is executed. It follows that $S\cup\{\Next(s)\}\in\mathcal{M}_k$. By induction hypothesis it follows that $S\cup \{s_\last\}\in\mathcal{M}_k$. When executing the body of the while loop with $T$ assigned to $S\cup \{s_\last\}$, the block beginning at line \ref{algline:enumerating_satteams_space_reduce} is executed, assigning $T$ to $S\setminus\{s\}\cup\{\Next(s)\}$.
\end{proof}\fi

\iflong
\begin{lemma}
	\label{lemma:enumerating_satteams_space_output}
	Let $k\in\natural$ and $S$ be a team with $S\models\varphi$ and $|S|\leq k$. Then it follows that $S\in\mathcal{M}_k$.
\end{lemma}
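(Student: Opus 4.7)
The plan is to prove the claim by induction on $|S|$, exploiting downward closure of $\pdlnovee$ to peel off the maximum assignment of $S$ and then invoking Lemma~\ref{lemma:enumerating_satteams_space_next} to expand and advance $T$ until it equals $S$.

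For the base case $|S|=1$, write $S=\{t\}$. The algorithm initialises $T$ to $\{s_\first\}$, so $\{s_\first\}\in\mathcal{M}_k$. If $t=s_\first$ we are done; otherwise $s_\first<t$, whence $s_\first\in\HasNext$, and the ``in particular'' clause of Lemma~\ref{lemma:enumerating_satteams_space_next} applied with the target $t\ge s_\first$ yields $\{t\}=\{s_\first\}\setminus\{s_\first\}\cup\{t\}\in\mathcal{M}_k$.

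For the induction step, let $S=\{t_1,\dots,t_m\}$ with $t_1<\dots<t_m$, $m\ge 2$ and $m\le k$, and set $S'\defeq S\setminus\{t_m\}=\{t_1,\dots,t_{m-1}\}$. Since $\pdlnovee$ is downward closed, $S'\models\varphi$, and $|S'|=m-1<m\le k$, so by the induction hypothesis $S'\in\mathcal{M}_k$. Now consider the iteration of the while loop during which $T$ is assigned to $S'$: at that moment $|T|=m-1<k$, $T\models\varphi$, and $\max(T)=t_{m-1}\in\HasNext$ because $t_m>t_{m-1}$ witnesses a strictly larger assignment. The conditional in line~\ref{algline:enumerating_satteams_space_increase} therefore fires, and $T$ is subsequently assigned to $S'\cup\{\Next(t_{m-1})\}$. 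Hence $S'\cup\{\Next(t_{m-1})\}\in\mathcal{M}_k$. Since $\max\bigl(S'\cup\{\Next(t_{m-1})\}\bigr)=\Next(t_{m-1})\le t_m$, the ``in particular'' clause of Lemma~\ref{lemma:enumerating_satteams_space_next} applied with target $t_m$ yields $S'\cup\{t_m\}=S\in\mathcal{M}_k$, completing the induction.

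The main obstacle is justifying that when $T=S'$ the algorithm actually takes the expansion branch (line~\ref{algline:enumerating_satteams_space_increase}) rather than the advancement branch (line~\ref{algline:enumerating_satteams_space_next}): this requires the conjunction $|T|<k$, $T\models\varphi$, and $\max(T)\in\HasNext$, all three of which need to be read off carefully from the structure of $S$ (the strict inequality $|S'|<k$ is where the assumption $|S|\le k$ is consumed, downward closure supplies $S'\models\varphi$, and the existence of $t_m>t_{m-1}$ in $S$ witnesses $t_{m-1}\in\HasNext$). Once these conditions are verified, the remainder is a mechanical appeal to the preceding lemma.
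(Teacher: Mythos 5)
Your proof is correct and follows essentially the same route as the paper: induction on $|S|$, peeling off $\max(S)$ via downward closure and then invoking Lemma~\ref{lemma:enumerating_satteams_space_next} to advance the maximal assignment to the target. Your explicit verification that the expansion branch (line~\ref{algline:enumerating_satteams_space_increase}) fires when $T=S'$ is in fact more careful than the paper's own wording, which cites line~\ref{algline:enumerating_satteams_space_next} while describing the action of the expansion line.
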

\begin{proof}
	We conduct an induction over $|S|$.
	
	Induction basis ($|S|=1$): Clearly $\{s_\first\}\in\mathcal{M}_k$. By Lemma \ref{lemma:enumerating_satteams_space_next} every $1$-team is contained in $\mathcal{M}_k$.
		
	Induction step ($|S| - 1 \rightarrow |S|$, $1<|S|\leq k$): Let $s=\max(S)$. Since $\varphi$ is downward closed, it follows that $S\setminus\{s\}\models\varphi$. The induction hypothesis yields $S\setminus\{s\}\in\mathcal{M}_k$. Consequently the while loop is executed with $T$ assigned to $S\setminus\{s\}$. Line \ref{algline:enumerating_satteams_space_next} is executed, assigning $T$ to a team $S\setminus\{s\}\cup\{t\}$, where $t$ is an appropriate assignment with $t\leq s$. Lemma \ref{lemma:enumerating_satteams_space_next} yields $S\in\mathcal{M}_k$.
\end{proof}\fi
\iflong
\begin{lemma}
	\label{lemma:enumerating_satteams_space_delay}
	Let $f\in n^{\bigo(1)}$ be a polynomial time computable function. Then there exists a polynomial $p$ such that the $i$-th delay of Algorithm \ref{algo:enumerating_satteams_space} is bounded by $i^2p(|\varphi|)$.
\end{lemma}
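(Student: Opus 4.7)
The plan is to combine a per-iteration cost bound with a bound on the number of inner while-loop iterations between successive outputs, and to control the outer-loop round $k$ in terms of $i$.

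First, I would bound the cost of a single iteration of the inner while loop. Each iteration performs at most two model checks $T\models\varphi$ (once for the output condition, once for the extension condition) and a constant number of team-level operations (computing $\max(T)$, querying Next and HasNext, and a set union or difference). By Lemma~\ref{lemma:mc_pdlnovee_fpt} each model check runs in $\bigo(k^2|\varphi|)$ time, and each team operation runs in $\bigo(kn)\subseteq\bigo(k|\varphi|)$ when $T$ is stored as a sorted list. Hence each while-loop iteration costs $\bigo(k^2|\varphi|)$.

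Second, I would bound the outer-loop round $k$ in which the $(i+1)$-st output is produced. By the downward closure of $\pdlnovee$, $\setsatteams{k}\neq\emptyset$ implies $\setsatteams{j}\neq\emptyset$ for every $j\leq k$. By Lemma~\ref{lemma:enumerating_satteams_space_output} each round $j$ visits (and hence outputs) every team in $\setsatteams{j}$, so every such round $j\leq k$ yields at least one output. Consequently, if the $(i+1)$-st output appears in round $K$, then rounds $1,\ldots,K$ have together produced at least $K$ outputs, giving $K\leq i+1$. Thus throughout the $i$-th delay the current round satisfies $k\leq i+1$, so the per-iteration cost is bounded by $\bigo((i+1)^2|\varphi|)$.

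Third, and this is the main obstacle, I would bound the number of inner while-loop iterations occurring during the $i$-th delay. The crucial observation is that within each round every team value $T$ is visited at most once: extensions lengthen the agreeing prefix, advancing the maximum strictly increases one element, and backtrack-then-advance does the same after shortening the prefix, so the sequence of $T$-values is strictly lexicographically increasing. By downward closure, the visited set in round $k$ is contained in the teams of size at most $k$ all of whose strict prefixes (in the DFS order) satisfy $\varphi$. A counting argument bounding the size of this pruned DFS in terms of $\sum_{j\leq K} |\setsatteams{j}|$ (which is at most $i+1$ by the previous step), summed over the at most $K\leq i+1$ rounds that the delay touches, yields a polynomial-in-$i$ bound on the iteration count.

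Multiplying the iteration bound by the per-iteration cost gives a delay bounded by $i^2\cdot p(|\varphi|)$ for a suitable polynomial $p$. The non-trivial ingredient is the third step: turning the qualitative facts that (i) every visited team is a child of a satisfying team of smaller size and (ii) the traversal is lex-monotone into a tight polynomial bound on the number of iterations per delay. Both the downward closure of $\pdlnovee$ (to prune which branches of the DFS the algorithm enters) and the lex-monotonicity (to rule out revisits) must be combined for this quantitative count.
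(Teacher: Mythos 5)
You have identified the two structural facts that the paper's proof also rests on---the lex-monotonicity of the successive values of $T$ (so no team is visited twice within a round) and the fact that every visited team $S$ with $|S|>1$ satisfies $S\setminus\{\max(S)\}\models\varphi$---but the quantitative core, which you defer to ``a counting argument \dots\ yields a polynomial-in-$i$ bound,'' is where the lemma actually lives, and two pieces of it are missing or set up so that they cannot deliver the stated bound. First, the number of teams visited in round $k$ is not controlled by $\sum_{j\le k}|\setsatteams{j}|$ alone but by $2^n\cdot\sum_{l<k}\cardsatteams{l}$: for each satisfying team of smaller cardinality there are up to $2^n$ choices of the appended maximal assignment. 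The factor $2^n$ is not polynomial in $|\varphi|$, so one must additionally observe that the $2^n$ singletons are all output first with trivial delay, whence one may assume $i\ge 2^n$ and absorb this factor into $i$; this observation (the opening line of the paper's proof) is absent from your sketch, and without it the iteration bound is exponential in $|\varphi|$ rather than of the form $i^2p(|\varphi|)$.

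Second, your decomposition does not multiply out to $i^2p(|\varphi|)$. You bound the round index by $k\le i+1$ and feed this into the per-iteration cost, obtaining $\bigo((i+1)^2|\varphi|)$ per iteration; but the number of iterations elapsing during the $i$-th delay genuinely grows with $i$ (it is bounded only by $\sum_{j\le k}|\mathcal{M}_j|\le k\cdot 2^n\cdot(1+\sum_{l<k}\cardsatteams{l})$, which is of order $i^2$ up to a factor $k$), so the product of your two bounds is at least of order $i^4\cdot|\varphi|$. The correct move is the opposite of yours: bound the round index by $f(|\varphi|)\in n^{\bigo(1)}$ (the outer loop never exceeds this), so that by Lemma~\ref{lemma:mc_pdlnovee_fpt} each iteration costs time polynomial in $|\varphi|$ \emph{independently of $i$}, and place all the $i$-dependence in the iteration count $f(|\varphi|)\cdot i\cdot(i+1)=i^2\cdot\text{poly}(|\varphi|)$. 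Your bound $K\le i+1$ is true but unhelpful here. Finally, the postcomputation phase also has to be bounded (the paper does this separately at the end), which your proposal does not address.
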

\begin{proof}
	Note that the delay is constant when outputting the $2^n$ singletons that satisfy $\varphi$ trivially. Hence we assume that $i\geq 2^n$. It is easy to verify that any team $T$ is assigned to is lexicographically larger than the previous value for $T$. For this reason the number of iterations of the inner while loop is bounded by $|\mathcal{M}_k|$.
	
	As $T$ is not assigned to teams with greater cardinality when the current value for $T$ does not satisfy $\varphi$, it follows that $S\setminus\{\max(S)\}\models\varphi$ for any $S\in\mathcal{M}_k$ with $|S|>1$. Consequently 
	$$|\mathcal{M}_k|\leq 2^n\sum_{l=0}^{k-1}\cardsatteams{l}\leq i\sum_{l=0}^{k-1}\cardsatteams{l}.$$
	
	Let $S$ be the ($i+1$)-th outputted element. Set $k=|S|$. We have $S\in\mathcal{M}_k$ and $|S|>1$. By outputting teams of lower cardinality first we guarantee that $i\geq \sum_{l=1}^{k-1}\cardsatteams{l}$. Furthermore $S$ is outputted in the $k$-th iteration of the outer loop. Consequently the inner while loop has been executed at most $k\cdot i^2$ times before outputting $S$. Since $k$ is bounded by a polynomial in $|\varphi|$, by Lemma \ref{lemma:mc_pdlnovee_fpt} it follows that the body of the inner while loop can be executed in polynomial time. We conclude that the $i$-th delay is bounded by $i^2p(|\varphi|)$, where $p$ is an appropriate polynomial.
	
	Now let $i$ be the total amount of outputted teams. Then the number of iterations of the inner while loop is bounded by
	\begin{align*}
	\sum_{k=1}^{f(|\varphi|)} |\mathcal{M}_k| 
	\leq f(|\varphi|) \cdot |\mathcal{M}_{f(|\varphi|)}| 
	\leq f(|\varphi|) \cdot 2^n\sum_{l=0}^{f(|\varphi|)}\cardsatteams{l} 
	\leq f(|\varphi|) \cdot i^2.
	\end{align*}
	Accordingly, we can choose $p$ such that even the postcomputation phase is bounded by $i^2p(|\varphi|)$.
\end{proof}\fi

\begin{theorem}
	Let $f\in n^{\bigo(1)}$ be a polynomial time computable function. Then Algorithm \ref{algo:enumerating_satteams_space} is an \incP-algorithm for $\enumteamplsize(\pdlnovee, f)$ which performs in polynomial space.
\end{theorem}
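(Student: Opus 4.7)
The plan is to package together the technical lemmas already available about the set $\mathcal{M}_k$ of teams visited by the inner loop during the $k$-th iteration of the outer loop, and add a straightforward space analysis. Three things have to be verified: completeness and non-redundancy of the output, the $\incP$ delay bound, and the polynomial-space guarantee.

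First I would argue correctness. By Lemma \ref{lemma:enumerating_satteams_space_output}, every team $S$ with $S\models\varphi$ and $|S|\leq k$ lies in $\mathcal{M}_k$, hence is printed when $T$ is assigned to it in the $k$-th iteration (since the output condition is $|T|=k$ and $T\models\varphi$). Non-redundancy inside a fixed outer iteration follows because each update to $T$ in the inner loop strictly increases it in the lexicographical order on teams: the branch at line~\ref{algline:enumerating_satteams_space_increase} enlarges $T$ by an assignment larger than $\max(T)$, the branch at line~\ref{algline:enumerating_satteams_space_next} replaces $\max(T)$ by its successor, and the ``backtrack then advance'' block similarly yields a strictly lex-larger team. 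Duplicate-freeness across outer iterations is immediate from the printing condition $|T|=k$, which also ensures that the overall output is sorted by cardinality.

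Second, the delay bound is exactly the statement of Lemma \ref{lemma:enumerating_satteams_space_delay}, which gives $i^{2}p(|\varphi|)$ for a polynomial $p$, and this is polynomial in $|\varphi|+i$ as required by $\incP$. The only facts feeding into that lemma are that each inner loop body runs in polynomial time (model checking via Lemma \ref{lemma:mc_pdlnovee_fpt}, plus the $\bigo(n)$ cost of $\HasNext$ and $\Next$) and that $|\mathcal{M}_k|$ is controlled by the number of satisfying teams of strictly smaller cardinality, so nothing new is needed here.

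Finally, for the space bound I would just inspect the state maintained throughout the computation: the counter $k$ (bounded by $f(|\varphi|)=|\varphi|^{\bigo(1)}$), the current team $T$ of cardinality at most $f(|\varphi|)$ with each assignment using $n\leq|\varphi|$ bits, the pointer $s=\max(T)$, and scratch registers for $\Next$ and for the $2$-coherence-based model check. All of these are polynomial in $|\varphi|$, and each register stores a value polynomially bounded in $|\varphi|$, matching the RAM convention from the Preliminaries. I do not expect a genuine obstacle; the conceptual content sits in the $\mathcal{M}_k$ lemmas, and the theorem is essentially their bookkeeping.
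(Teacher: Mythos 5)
Your proposal is correct and follows essentially the same route as the paper: it assembles Lemma \ref{lemma:enumerating_satteams_space_output} for completeness and ordering, Lemma \ref{lemma:enumerating_satteams_space_delay} for the $\incP$ delay bound, and a direct inspection of the stored registers for the polynomial-space claim. The only (harmless) difference is that you spell out duplicate-freeness via the strict lexicographical increase of $T$ inside the theorem's proof, whereas the paper delegates that observation to the proof of the delay lemma.
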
\iflong
\begin{proof}
	The algorithm saves only one team of cardinality $\leq f(|\varphi|)$ and one assignment for which $(f(|\varphi|)+1)$ registers are required. By Lemma \ref{lemma:enumerating_satteams_space_output} it is clear that the algorithm outputs the satisfying teams ordered by cardinality. Lemma \ref{lemma:enumerating_satteams_space_delay} states that the delays conform to the definition of \incP{}.
\end{proof}\fi

\section{Conclusion}
In this paper we have shown that the task of enumerating all satisfying teams of a given propositional dependence logic formula without split junction is a hard task when sorting the output by its cardinality, i.e., only for polynomially sized teams, we constructed a $\delayP$ algorithm.
In the unrestricted cases, we showed that the problem is in $\delayFPT$ when the parameter is chosen to be the team size.
Further, we explained that the algorithm is optimal regarding its output size and pointed out that any algorithm saving a representative system for the orbits of $\ftwo^n\acts\setsatteams{k}$ cannot perform in polynomial space.

Furthermore, we want to point out that allowing for split junction (and accordingly talking about full $\mathcal{PD\!L}$) will not yield any $\delayFPT$ or $\delayP$ algorithms in our setting unless $\p=\np$.

Lastly, we would like to mention that the algorithms enumerating orbits and the satisfying teams, respectively, can be modified such that satisfying teams for formulas of the form $\varphi_1 \classvee \varphi_2 \classvee \cdots \classvee \varphi_r$ with $r\in\natural,\ \varphi_i\in\pdlnovee$ can be enumerated\iflong\else, where $\classvee$ is the classical disjunction\fi. 
The idea is to merge the outputs $\sol(\varphi_i)$, $i\in\{1,\dots,r\}$, which is possible in polynomial delay if the output for each $\varphi_i$ is pre-sorted according to a total order.

By now, we presented an algorithm that sorts the output by cardinality. 
It remains open to identify the enumeration complexity of Poor Man's Propositional Dependence Logic when other orders, e.g., the lexicographical order, are considered. 
Besides, one can investigate the conjunction free fragment of $\mathcal{PD\!L}$, permitting the split junction operator but no conjunction operator. 
Similarly to the Poor Man's fragment, one can assume that the group action of flipping bits is an invariant for satisfying teams when formulas are simplified properly. 
Nonetheless, the $2$-coherence property is lost so that the algorithm for constructing the sets $\setsatteamsz{k}$ fails.

Finally, we want to close with some questions. 
Are there exact connections or translations to concrete fragments of SQL or relational algebra (relational calculus)? 
Currently, propositional dependence logic can be understood as relational algebra on a finite (and two valued) domain. 
Do the presented enumeration algorithms mirror or even improve known algorithmic tasks in database theory?
Are there better fragments or extensions of $\pdlnovee$ with a broader significance for practice?




\bibliography{enumpdl}

\end{document}